\newcommand{\qw}[1][-1]{\ar @{-} [0,#1]}
\newcommand{\qwx}[1][-1]{\ar @{-} [#1,0]}
\newcommand{\gate}[1]{*+<.6em>{#1} \POS ="i","i"+UR;"i"+UL **\dir{-};"i"+DL **\dir{-};"i"+DR **\dir{-};"i"+UR **\dir{-},"i" \qw}
\newcommand{\control}{*!<0em,.025em>-=-<.2em>{\bullet}}
\newcommand{\ctrl}[1]{\control \qwx[#1] \qw}
\newcommand{\lstick}[1]{*!R!<.5em,0em>=<0em>{#1}}
\newcommand{\Qcircuit}{\xymatrix @*=<0em>}
\newcommand{\qs}[1]
{|#1\rangle}
\newcommand{\qsl}[1]
{\langle #1 |}
\newcommand{\qinner}[2]
{\langle #1 | #2 \rangle }
\begin{document}
\theoremstyle{plain}
\newtheorem{theorem}{Theorem}
\newtheorem{lemma}{Lemma}
\newtheorem{claim}{Claim}
\newtheorem{corollary}{Corollary}
\newtheorem{remark}{Remark}

%\parskip 1mm
%%%%%%%%%%%%%%%%%%%%%%%%%%%%%%%%%%%%%%%%%%%%%%%%%%%%%%%%% Author's macros

\newcommand{\finpr}{\hfill $\square$ \vspace{2mm}}

\def\be{\begin{eqnarray}}

\def\ee{\end{eqnarray}}

\def\bee{\begin{eqnarray*}}

\def\eee{\end{eqnarray*}}

\title{ \vspace{-2cm} Commuting quantum circuits: efficient classical simulations versus hardness results}
\author{Xiaotong Ni\footnote{xiaotong.ni@mpq.mpg.de} \ \  and \ Maarten Van den  Nest\footnote{maarten.vandennest@mpq.mpg.de} \\ \\ {\normalsize Max-Planck-Institut f\"ur Quantenoptik,} \\ {\normalsize Hans-Kopfermann-Stra\ss e 1, D-85748 Garching, Germany. }}

\maketitle

\begin{abstract}

The study of quantum circuits composed of commuting gates is particularly useful to understand the delicate boundary between quantum and classical computation. Indeed, while being a restricted class, commuting circuits exhibit genuine quantum effects such as entanglement. In this paper we show that the computational power of commuting circuits exhibits a surprisingly rich structure. First we show that every 2-local commuting circuit acting on $d$-level systems and followed by single-qudit measurements can be efficiently simulated classically with high accuracy. In contrast, we  prove that such strong simulations are hard for 3-local circuits. Using sampling methods we further show that all commuting circuits composed of exponentiated Pauli operators $e^{i\theta P}$ can be simulated efficiently classically when followed by single-qubit measurements. Finally, we show that commuting circuits can efficiently simulate certain non-commutative processes, related in particular to constant-depth quantum circuits. This gives evidence that the power of commuting circuits goes beyond classical computation.
\end{abstract}

\section{Introduction}

Since the discovery of Shor's factoring algorithm \cite{shor1999polynomial}, the question whether quantum computers possess exponentially more power then classical computers has been one of the central problems in the field. Similar to other notorious problems in computational complexity theory, this question is very difficult. For example, a proof that $\text{P}\neq \text{BQP}$ would imply that $\text{P}\neq \text{PSPACE}$, which is a longstanding open problem. A useful approach to gain insight into the relationship between quantum and classical computing power is to study restricted classes of quantum circuits and analyze their  power.  For several restricted but nontrivial classes of quantum circuits, it has been found that efficient classical simulations are possible. For instance, if in each step of a quantum circuit the entanglement (quantified by the $p$-blockedness \cite{Jozsa2002Role} or by the Schmidt rank \cite{Vidal2003Efficient}) is bounded, the circuit can be simulated efficiently classically. Such results demonstrate that certain types of entanglement must be generated in sufficiently large amounts if a quantum algorithm is to yield an exponential speed-up. Certain other circuit classes can be simulated classically using entirely different arguments not based on entanglement considerations  \cite{Gottesman1997Stabilizer, Valiant2002Quantum, Knill2001Fermionic, Jozsa2008Matchgates, VandenNest2010Simulating, VandenNest2012Efficient}, e.g. by using the Pauli stabilizer formalism \cite{Gottesman1997Stabilizer} or the framework of matchgate tensors \cite{Valiant2002Quantum, Knill2001Fermionic, Jozsa2008Matchgates} .

Conversely, it has been shown that some restricted quantum computation schemes can perform tasks that appear to be hard classically \cite{Aaronson2011computational, Bremner2010Classical, Jordan2010permutational}. For example, in \cite{Aaronson2011computational} the hardness of simulating linear optical quantum computation was discussed. In \cite{Bremner2010Classical} it was shown that simulating the output probability distribution of commuting quantum circuits would imply a collapse of the polynomial hierarchy and is thus highly unlikely. Besides theoretical importance, these results also lower the threshold to demonstrate nontrivial quantum computation in experiments.

In this paper we  focus on commuting quantum circuits. Several features make such circuits interesting. For example, commuting circuits exhibit genuine quantum effects, e.g. they can generate highly entangled states (such as cluster states \cite{Briegel2001persistent}). Further, since commuting operations can be performed simultaneously, there is no time order in the computation, which is drastically different from other computational models. Moreover, all gates in the circuit can be diagonalized simultaneously. The latter property might at first sight suggest an intrinsic simplicity of this circuit class; however it is important to note that the diagonalizing unitary can by a complex entangling operator. In \cite{Bremner2010Classical} as well as in the present paper evidence is given that commuting circuits indeed have nontrivial power beyond classical computation. It is also interesting to note that commuting operations have recently caught attention in different areas as well, such as the study of local Hamiltonian problem \cite{Bravyi2004Commutative,Aharonov2011Complexity, Schuch2011complexity, Hastings2012trivial}.

Compared to earlier work \cite{Shepherd2009instantaneous, Bremner2010Classical,Shepherd2010binary} which considered commuting gates that can be diagonalized in a local basis, we will consider general commuting gates acting on $d$-level systems. We will show that the computational power of commuting circuits exhibits a surprisingly rich structure. For example, the degree of hardness varies significantly depending on whether the gates are  2-local or 3-local. This indicates that commuting quantum circuits might serve as a interesting intermediate class between classical and universal quantum circuits. 

Our main results can be summarized as follows (here the terms "strongly" and "weakly" specify different notions of classical simulation, to be defined below):
\begin{itemize}
\item {\it 2-local circuits are easy.} All uniform families of commuting circuits consisting of 2-local gates acting on $d$-level systems and followed by a single measurement can be strongly simulated by classical computation, for every $d$.
\item {\it 3-local circuits are hard. }Uniform families of commuting circuits consisting of 3-local gates acting on qubit systems and followed by a single measurement cannot be strongly simulated by classical computation, unless every problem in $\#$P has a polynomial-time classical algorithm.
\item {\it Commuting Pauli circuits.} All uniform families of commuting circuits consisting of exponentiated Pauli operators $e^{i\theta P}$ and followed by a single-qubit measurement can be efficiently simulated classically weakly. Furthermore, even when such circuits display a small degree of non-commutativity, an efficient classical simulation remains possible.
\item {\it Mapping non-commuting circuits to commuting circuits. } Certain non-commuting quantum processes (related to bounded-depth circuits) can be efficiently simulated by purely commuting quantum circuits.
\end{itemize}
Finally, it is noteworthy that several distinct techniques were used to prove the above  results, including tensor network methods, sampling methods as well as the Pauli stabilizer formalism. This is also an illustration of the rich structure displayed by commuting quantum circuits.
\section{Preliminaries}

\subsection{Commuting quantum circuits}
The quantum circuits considered in this work will always be unitary. The size of a circuit is the number of gates of which it consists. A $d$-level Hilbert space will sometimes generically be called a ``qudit''. For an operator $A$ acts on a system of $n$ qudits labeled by $1\cdots n$, the support of $A$ is the subset of qudits on which it acts nontrivially. A quantum circuit acting on $n$ qudits is said to be $k$-local if the support of each of its gates contains at most $k$ qudits. A family of $n$-qubit quantum circuits ${\cal C}_n$ has \emph{polynomial size} $m$ if $m$ scales polynomially with $n$, denoted by $m=$ poly$(n)$.

A \emph{commuting circuit} is a quantum circuit consisting of pairwise commuting gates. A $k$-local commuting circuit is in \emph{standard form} if for every subset $S\subseteq\{1, \dots, n\}$ consisting of $k$ qudits there is at most one gate $G_i$ with supp$(G_i)\subseteq S$. A $k$-local commuting circuit ${\cal C}= G_m\cdots G_1$ in standard form contains at most $n \choose k$ gates, so that the size of the circuit scales polynomially with $n$ if $k$ is constant. For example, a two-local commuting circuit is in standard form if for every $i, j=1\cdots n$ with $i<j$ there is at most one gate in the circuit with support contained in $\{i, j\}$; such circuit has size $O(n^2)$. Every $k$-local commuting circuit can be brought into standard form by replacing all gates in the circuit with support contained in $S$ by a single gate given by the total product of these gates, for every subset $S$ consisting of $k$ qudits. Furthermore if $k$ is constant and if the original circuit has size $m$ then this procedure to bring a circuit in normal form can be carried out efficiently i.e. in poly$(n, m)$ steps.

A simple example of a commuting circuit is ${\cal C} = G_m\cdots G_1$ with gates  \be \label{example_comm}G_i = U\otimes U D_i U^{\dagger}\otimes U^{\dagger},\ee where $U$ is a fixed single-qudit unitary operator (independent of $i$) and where each $D_i$ is a diagonal unitary operator. In other words each gate is diagonal in the same local basis. This  class of commuting circuits has been considered in \cite{Shepherd2009instantaneous,Bremner2010Classical}.

By their commutativity, all gates in any commuting circuit can be diagonalized simultaneously i.e. there exists a unitary operator $V$ such that $VG_iV^{\dagger}$ is diagonal for every gate $G_i$. In the example (\ref{example_comm}), the diagonalizing operator is a simple tensor product $V=U\otimes\cdots\otimes U$. This example does however not represent the most general situation since $V$ may be a global, entangling operation---even when the commuting circuit is $k$-local with $k$ constant. Consider for example an $n$-qubit 3-local circuit with gates  $G_j= e^{i\theta_j K_j}$ where the commuting  operators $K_j$ are defined as follows:\be\label{K_k} K_1 = X_1Z_2,\quad  K_i = Z_{i-1} X_i Z_i, \quad K_n= Z_{n-1}X_n, \quad \mbox{with}\quad i=2, \dots n-1.\ee Here $Z_i$ and $X_i$ denote the Pauli $X$ and $Z$ operators acting on qubit $i$. The operators $K_j$ are the well known stabilizers of the 1D cluster state. Let $H$ denote the Hadamard gate and let CZ $=$ diag$(1, 1, 1, -1)$ denote the controlled-$Z$ gate. It is then easily verified that the entangling operation \be V = H^{\otimes n}\prod_{i=1}^{n-1} CZ_{i, i+1}\ee sends $K_j\to VK_jV^{\dagger} = Z_j$ and thus simultaneously diagonalizes the $K_j$. Furthermore it can be shown that no tensor product of single-qubit operations can perform such a diagonalization .

The example (\ref{K_k}) shows that there exist $k$-local commuting circuits where the diagonalizing unitary $V$ is a global, entangling operator. Nevertheless this example is still rather well-behaved as $V$ can be computed efficiently and moreover has a relatively simple structure. In fact in section \ref{sect_pauli} we will investigate commuting circuits composed of exponentiated Pauli operators $e^{i\theta P}$ in more detail and show that such circuits have efficient classical simulations (relative to certain measurements). For general $k$-local commuting circuits, however, the unitary $V$ may have a complex structure and be computationally difficult to determine. This feature is in part responsible for the complexity of commuting quantum circuits.

\subsection{Classical simulations of quantum circuits}\label{sect_class_simulation}

There are several valid notions of efficient classical simulations of quantum circuits. Two notions will be considered in this work viz. \emph{strong} and \emph{weak} simulations. Their main difference lies in the \emph{accuracy} achieved in the classical simulation: roughly speaking, strong simulations achieve an exponential precision whereas weak simulations achieve polynomial precision. We mainly follow the definitions of ~\cite{Bremner2010Classical}.

Consider a uniform family of  $k$-local $n$-qubit quantum circuits ${\cal C}_n$ for some constant $k$.  The input states are standard basis states. The circuits are followed by measurement of the Pauli observable  $Z$ on the first qubit. The expectation value is denoted by $\langle Z_1\rangle$.

\

\noindent{\bf Strong simulations.} We say that ${\cal C}_n$ can be efficiently simulated classically in the strong sense if there exists a classical algorithm with runtime poly$(n, \log \frac{1}{\epsilon})$  which outputs a number $E$ such that \be\label{def_strong} | E - \langle Z_1\rangle| \leq \epsilon.\ee  Thus a strong simulation algorithm achieves an exponential accuracy $\epsilon = 2^{-\mbox{\scriptsize{poly}}(n)}$  in poly$(n)$ time.

\

\noindent{\bf Weak simulations.} We say that ${\cal C}_n$ can be efficiently simulated classically in the weak sense if there exists a  classical algorithm with runtime poly$(n, \frac{1}{\epsilon})$ which outputs a number $E$ satisfying (\ref{def_strong}). Thus a weak simulation algorithm achieves  \emph{polynomial} accuracy $\epsilon = 1/$poly$(n)$ in polynomial time. We will often allow weak simulations to fail with an exponentially small probability. In this sense, we say that ${\cal C}_n$ can be efficiently simulated classically in the weak sense if there exists a probabilistic classical algorithm with runtime poly$(n, \frac{1}{\epsilon}, \log \frac{1}{1-p})$ which outputs a number $E$ satisfying (\ref{def_strong}) with probability $p$. Thus for polynomial accuracies and for success probabilities which are exponentially (in $n$) close to 1, the classical simulation runs in poly$(n)$ time.

\

\noindent The motivation for the definition of a weak simulation originates from the fact the polynomial error scaling $\epsilon = 1/$poly$(n)$  captures how accurately the expectation value $\langle Z_1\rangle$ can be estimated by running the quantum circuit ${\cal C}_n$ polynomially many times. See section \ref{sect_chernoff} below and \cite{VandenNest2010Simulating} for a more extensive discussion.

The above definitions can readily be generalized to take into account more general inputs (e.g. arbitrary product states) and measurements (e.g. arbitrary single-qubit observables) as well as $d$-level systems. Finally, note that we will often use the term ``simulation'' as shorthand for ``efficient classical simulation''.

\subsection{Chernoff-Hoeffding bound}\label{sect_chernoff}

The Chernoff-Hoeffding bound is a tool to bound how accurately the expectation value of a random variable may be approximated using of ``sample averages''. Let $X_1, \dots X_K$ be i.i.d. real-valued random variables with $E := \mathbb{E}X_i$ and $X_i\in [-1, 1]$ for every $i=1, \dots, K$. Then  the Chernoff-Hoeffding bound asserts that \be\label{hoeff} \mbox{Prob} \left\{ \left|\frac{1}{K}\sum_{i=1}^K X_i - E\right| \leq \epsilon\right\} \geq 1-2 e^{- \frac{K\epsilon^2}{4}}.\ee For complex-valued  $X_i$ a similar bound can be
obtained for $|X_i|\leq 1$.

As an illustration, consider an $n$-qubit quantum circuit family ${\cal C}_n$ followed by measurement of $Z_1$ as in section \ref{sect_class_simulation}. Suppose that the circuit is run $K$ times, yielding an outcome $z_i\in\{1, -1\}$ in each run. Using (\ref{hoeff}) one shows that the number $E:= [\sum z_i]/K$, where the sum is over all $i=1\cdots K$,  satisfies $|E- \langle Z_1\rangle|\leq \epsilon $ with probability $p\geq 1-2 e^{- K\epsilon^2/4}$. Consequently, for any $\epsilon = 1/$poly$(n)$ there exists a suitable  $K=$ poly$(n)$ such that $|E-\langle Z_1\rangle|\leq \epsilon$ holds with probability  $p$ exponentially close to 1. In other words, the above procedure allows to achieve a polynomial approximation of $\langle Z_1\rangle$ in polynomial time with exponentially small probability of failure. This performance of the quantum computation corresponds precisely to the performance required of weak classical simulations, cf. section \ref{sect_class_simulation}.

\section{$2$-Local commuting circuits are easy} \label{sec:commutegatetwo}

Here we consider 2-local commuting circuits acting on general $d$-level systems. The main conclusion of this section will be that such circuits, when followed by single-qudit measurements, cannot outperform classical computation. In fact we will show that their power is even \emph{strictly} contained in P and give a concrete example of a simple function which cannot be computed with such commuting circuits.

\subsection{Efficient strong simulation of one qudit}

\begin{theorem}{\bf (Strong simulations of 2-local commuting circuits)}\label{thm_simulation_two_local}
Let ${\cal C}$ be a uniform family of 2-local $n$-qudit commuting circuits, acting on a product input state and followed by measurement of an observable $O$ acting on qudit $i$ for some $i$. Any such computation can be efficiently simulated classically in the strong sense.
\end{theorem}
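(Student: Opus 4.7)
The plan is to exploit three features of 2-local commuting circuits: free reordering of the gates, the cancellation of gates whose support avoids qudit $i$, and the preservation of product structure under partial traces of the remaining 2-local gates that all share qudit $i$.

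First I would bring $\mathcal{C}$ into standard form in $\mathrm{poly}(n,m)$ time and partition its gates into two commuting sub-factors: $\mathcal{C}^{(i)} = \prod_{j\neq i} G_{ij}$, collecting all gates whose support contains $i$ (absorbing any single-qudit gate on $i$), and $\mathcal{C}^{(\bar i)}$, the product of gates acting only on the complement $\bar i := \{1,\ldots,n\}\setminus\{i\}$. Because the full circuit is commuting, $\mathcal{C}^{(\bar i)}$ commutes with both $\mathcal{C}^{(i)}$ and $\mathcal{C}^{(i)\dagger}$; moreover $\mathcal{C}^{(\bar i)}$ commutes with $O_i$ as well because the two operators have disjoint supports. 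Cyclicity of the trace together with $\mathcal{C}^{(\bar i)\dagger}\mathcal{C}^{(\bar i)} = I$ then yields
\[
\langle O_i\rangle = \mathrm{Tr}\!\left[\mathcal{C}^{(i)}\mathcal{C}^{(\bar i)}|\psi\rangle\langle\psi|\mathcal{C}^{(\bar i)\dagger}\mathcal{C}^{(i)\dagger}O_i\right] = \mathrm{Tr}\!\left[\mathcal{C}^{(i)}|\psi\rangle\langle\psi|\mathcal{C}^{(i)\dagger}O_i\right],
\]
so the gates in $\mathcal{C}^{(\bar i)}$ drop out and the effective input is still the product state $|\psi\rangle$.

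Next I would order the at most $n-1$ remaining gates as $\mathcal{C}^{(i)} = G_{i,j_1}G_{i,j_2}\cdots G_{i,j_{n-1}}$ and compute $\rho_i^{\mathrm{out}} := \mathrm{Tr}_{\bar i}[\mathcal{C}^{(i)}|\psi\rangle\langle\psi|\mathcal{C}^{(i)\dagger}]$ by tracing out the qudits $j_{n-1},j_{n-2},\ldots,j_1$ one at a time. At each step, every $G_{i,j_k}$ with $k<\ell$ acts trivially on $j_\ell$ and hence commutes with $\mathrm{Tr}_{j_\ell}$, so the partial trace can be pushed inside to act only on $G_{i,j_\ell}$. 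Because the state on the untraced qudits is inductively still a tensor product of the form $\sigma_i \otimes \bigotimes_{k}|\psi_k\rangle\langle\psi_k|$, each application of $\mathrm{Tr}_{j_\ell}$ reduces to a single-qudit quantum channel on qudit $i$, giving
\[
\rho_i^{\mathrm{out}} = \mathcal{E}_{i,j_1}\circ\mathcal{E}_{i,j_2}\circ\cdots\circ\mathcal{E}_{i,j_{n-1}}\!\left(|\psi_i\rangle\langle\psi_i|\right), \quad \mathcal{E}_{ij}(\sigma) := \mathrm{Tr}_j\!\left[G_{ij}\bigl(\sigma\otimes|\psi_j\rangle\langle\psi_j|\bigr)G_{ij}^\dagger\right].
\]

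The theorem then follows at once: each $\mathcal{E}_{ij}$ can be applied to a $d\times d$ density matrix in $O(d^6)$ arithmetic operations, so composing $n-1$ such channels and evaluating $\mathrm{Tr}[O\rho_i^{\mathrm{out}}]$ costs $\mathrm{poly}(n,d)$ steps in total, and the computation is exact up to numerical precision so that the exponential accuracy required for strong simulation is trivially achieved. The one subtle step is the inductive claim that partial traces preserve the product structure on the untraced complement; this is exactly what 2-locality provides and what would fail starting at 3-local gates, foreshadowing the hardness result of the next section.
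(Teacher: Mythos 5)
Your proof is correct and follows essentially the same route as the paper's: first eliminate all gates whose support avoids qudit $i$ by commuting them out, then contract the remaining star of two-qudit gates one qudit at a time, each step reducing to an $O(d^6)$ local computation. The only (cosmetic) difference is that you work in the Schr\"odinger picture, composing single-qudit channels $\mathcal{E}_{ij}$ on the reduced state, whereas the paper works in the Heisenberg picture, iteratively updating a single-qudit observable $O^{(k)} = [I\otimes\langle\alpha_k|]\,U_{1k}^\dagger O^{(k-1)} U_{1k}\,[I\otimes|\alpha_k\rangle]$; the two bookkeeping schemes are interchangeable and your observation that preservation of the product structure is exactly what fails beyond $2$-locality is a nice touch that the paper leaves implicit.
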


\begin{proof}
We prove the result for $i=1$; other $i$ are treated fully analogously.  Denote the input by $|\alpha\rangle=|\alpha_1\rangle\cdots|\alpha_n\rangle$ where each $|\alpha_i\rangle$ is a single-qudit state. We can assume without loss of generality  that ${\cal C} = \prod U_{jk}$ is in standard form, where $U_{jk}$ represents the unique gate in the circuit with support $S\subseteq \{j, k\}$, for every $j, k =1\cdots n$ and $j<k$.
If $U_{jk}$ does not act on qudit 1, then this gate commutes with $O$. Hence in the product ${\cal C}^\dagger O{\cal C}$ we can commute $U_{jk}$ through ${\cal C}$ and $O$  to the left until it cancels out with $U_{jk}^\dagger$. By doing so, we can remove all gates that do not act on qudit 1. Therefore the expectation value of $O$ is
\begin{equation}
\langle O\rangle = \qsl{\alpha}{\cal C}^\dagger O{\cal C}\qs{\alpha} = \qsl{\alpha}(\prod U_{1j})^\dagger O\prod U_{1j}\qs{\alpha}
\end{equation}
where the products are over all $j\geq 2$. Now our strategy will be to trace out qudits one by one in the above equation.
Denote $|\alpha^{(1)}\rangle = |\alpha\rangle$, ${\cal C}^{(1)}={\cal C}$ and  $O^{(1)}=O$. Furthermore for every $k=2, \dots, n-1$ define \be |\alpha^{(k)}\rangle&=& \qs{\alpha_1}\qs{ \alpha_{k+1}}\cdots \qs{\alpha_n}\nonumber\\ {\cal C}^{(k)} &=& \ U_{1 k+1}\cdots U_{1n}\nonumber\\ O^{(k)}&=&[I\otimes \qsl{\alpha_k}]\ U_{1k}^{\dagger} O^{(k-1)} U_{1k}\ [ I\otimes \qs{\alpha_k}].\ee  Remark that each $O^{(k)}$ acts on a single qudit (namely qudit 1). Furthermore each of these operators can be computed classically with exponential precision in polynomial time: $O^{(1)}$ is given as an input and each update from $O^{(j)}$ to $O^{(j+1)}$ involves simple multiplications of 2-qudit operations which can be done in constant time (taking $O(d^6)$ steps where $d$ denotes the dimension of one qudit).

With the above definitions one finds, for every $k=2, \dots, n-1$:
\begin{equation}
\qsl{\alpha^{(k-1)}}[{\cal C}^{(k-1)}]^{\dagger}O^{(k-1)}{\cal C}^{(k-1)}\qs{\alpha^{(k-1)}}= \qsl{\alpha^{(k)}}[{\cal C}^{(k)}]^{\dagger}O^{(k)}{\cal C}^{(k)}\qs{\alpha^{(k)}}.
\end{equation}
Using this equation iteratively, we get
\begin{equation}
\langle O\rangle = \langle \alpha^{(1)}|[{\cal C}^{(1)}]^{\dagger} O^{(1)}{\cal C}^{(1)}|\alpha^{(1)}\rangle = \cdots
=\qsl{\alpha^{(n-1)}}U_{1n}^{\dagger} O^{(n-1)} U_{1n}\qs{\alpha^{(n)}}.
\end{equation}
The last expression is easily computed since $\qs{\alpha^{(n-1)}}$ is a 2-qudit state and $U_{1n}$ and $ O^{(n-1)}$ act on at most 2 qudits.
\end{proof}

The above result can readily be generalized in different ways. First, using a similar argument one shows that  measurement of any observable acting  on $O(\log n)$ qudits can be strongly simulated as well. Furthermore, interestingly, the result also generalizes to mutually \emph{anticommuting} gates, and more generally to gates which commute ``up to a phase'' as follows. Let ${\cal C}=\prod G_i$ be a uniform family of 2-local $n$-qudit  circuits such that $G_iG_j = \gamma_{ij}G_jG_i$ for all pairs of gates, where the $\gamma_{ij}$ are complex phases. Input and measurement are as in theorem \ref{thm_simulation_two_local}. Then such circuits can be efficiently simulated classically in the strong sense. Analogous to the first step in the proof of theorem \ref{thm_simulation_two_local}, the proof starts by ``removing'' all gates which do not act on qudit $i$ from the product ${\cal C}^{\dagger} O{\cal C}$  by commuting them through the circuit. This introduces an (easily computed) product of phases $\gamma_{ij}$. The remainder of the proof of theorem \ref{thm_simulation_two_local} carries over straightforwardly.

\subsection{2-local commuting circuits cannot compute all functions in $P$} \label{subsect:limited}

Here we show that two-local commuting circuits are not universal for classical computation by giving an explicit example of a function which is not computable with such circuits.

For every $d$ we let $\mathbf{Z}_d$ denote the set of integers modulo $d$. 
Let ${\cal C}$ denote a two-local commuting circuit acting on $m$ $d$-level systems. Consider a function $f:\mathbf{Z}_d^k\to \mathbf{Z}_d$. We say that ${\cal C}$ computes $f$ with probability at least $p$ if the circuit ${\cal C}$ acing on $|x, 0\rangle$ (where $0$ denotes a string of $m-k$ zeroes) and followed by a standard basis measurement of the first qudit yields the outcome $f(x)$ with probability at least $p$. 

We will in particular consider  the ``inner product function'' $f_{\mbox{\scriptsize{ip}}}:\mathbf{Z}_d^{2n}\to \mathbf{Z}_d$ defined by \be f_{\mbox{\scriptsize{ip}}}(x^a, x^b) = (x^a)^Tx^b \mod d,\ee  for every $x^a, x^b\in\mathbf{Z}_d^n$.

\begin{lemma}\label{lemma:compactset}
Let $\sigma_1,\dots \sigma_N$ be a collection of $d\times d$ density operators. For any $\epsilon>0$, if $N>\left(\frac{5}{\epsilon}\right)^{2d^2}$, then there exists two operators $\sigma_j$ and $\sigma_k$ such that  $\| \sigma_j-\sigma_k\|_{\mbox{\scriptsize{tr}}}\leq \epsilon$, where $\|A\|_{\mbox{\scriptsize{tr}}}\equiv \frac{1}{2}tr\sqrt{A^{\dag} A}$ denotes the trace distance.
\end{lemma}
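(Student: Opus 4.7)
The plan is to prove this by a standard volumetric covering / pigeonhole argument, exploiting the fact that density operators form a compact subset of a finite-dimensional real vector space and that the trace distance is a genuine metric on that space.

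First, I would identify the ambient space. The set $\mathcal{D}$ of $d\times d$ density operators sits inside the real vector space of $d\times d$ Hermitian matrices (of real dimension $d^2$); equivalently, if one forgets the Hermiticity constraint and stores the real and imaginary parts of all matrix entries, one can view $\mathcal{D}$ as a compact subset of $\mathbb{R}^{2d^2}$. I will work with the latter description, since the exponent $2d^2$ in the statement matches this (slightly wasteful) choice. The trace norm $\|\cdot\|_{\mbox{\scriptsize{tr}}}$ restricts to a bona fide norm on this real vector space, and for any two density operators one has $\|\sigma - \tau\|_{\mbox{\scriptsize{tr}}}\le 1$, so $\mathcal{D}$ has diameter at most $1$.

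Next I would invoke the standard covering-number estimate: a compact subset $K$ of a $D$-dimensional real normed space that is contained in a ball of radius $R$ admits an $\eta$-net (in that norm) of cardinality at most $(1+2R/\eta)^{D}$, by the usual volume-comparison argument comparing $\mathrm{Vol}(K+B_{\eta/2})$ with $\mathrm{Vol}(B_{\eta/2})$. Applying this with $D=2d^{2}$, $R=1$, and $\eta=\epsilon/2$ yields an $\epsilon/2$-net $\mathcal{N}\subseteq \mathcal{D}$ (in trace distance) of size at most $(1+4/\epsilon)^{2d^{2}}\le (5/\epsilon)^{2d^{2}}$ for all $\epsilon\in(0,1]$; for $\epsilon>1$ the statement is trivial since $\|\sigma_j-\sigma_k\|_{\mbox{\scriptsize{tr}}}\le 1<\epsilon$ for any two density operators.

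Finally, I would apply the pigeonhole principle. Since $N > (5/\epsilon)^{2d^{2}}\ge |\mathcal{N}|$, at least two of the $\sigma_{1},\dots,\sigma_{N}$ lie in a common closed trace-distance ball of radius $\epsilon/2$ centered at some point of $\mathcal{N}$. The triangle inequality then gives $\|\sigma_j-\sigma_k\|_{\mbox{\scriptsize{tr}}}\le \epsilon$, which is the desired conclusion.

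The main obstacle is really just bookkeeping: verifying that the constants line up to give the clean bound $(5/\epsilon)^{2d^{2}}$ rather than something mildly worse, and being careful about whether one parameterizes density matrices by $d^{2}$ real coordinates (Hermiticity) or $2d^{2}$ real coordinates (all complex entries). Everything else—compactness, the volume-based covering lemma, and pigeonhole—is entirely standard, and no structure specific to quantum states beyond the diameter bound in trace distance is needed.
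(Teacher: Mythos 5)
Your proof is correct, but it takes a genuinely different route from the paper's. You apply the standard volume-comparison covering bound directly to the set of $d\times d$ density operators, viewed as a compact subset of the real normed space $(\mathbb{R}^{2d^2},\|\cdot\|_{\mathrm{tr}})$ with diameter at most $1$: a maximal $\epsilon/2$-separated subset is an $\epsilon/2$-net, and packing the disjoint $\epsilon/4$-balls around its points into the ball of radius $1+\epsilon/4$ gives the cardinality bound $(1+4/\epsilon)^{2d^2}\le(5/\epsilon)^{2d^2}$ for $\epsilon\le 1$. The paper instead proceeds by \emph{purification}: it cites an existing $\epsilon$-net of pure states in dimension $d^2$ (from Hayden, Leung, Shor, Winter), takes the partial trace over the reference system to push it down to a net of density operators, and uses contractivity of the partial trace under the trace norm; the pigeonhole step is the same. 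Your argument is more self-contained and elementary --- it needs nothing beyond the diameter bound in trace distance and the generic volume-packing lemma --- whereas the paper's is shorter on the page because it outsources the net construction to a reference, and is more in the spirit of ``quantum'' techniques (purification plus monotonicity of a distinguishability measure). Both land on the same constant because both deliberately over-parameterize by a factor of roughly $2$ in the dimension --- you by embedding Hermitian matrices in $\mathbb{R}^{2d^2}$ rather than $\mathbb{R}^{d^2}$, the paper by lifting to pure states on a $d^2$-dimensional Hilbert space --- so neither bound is tight, which is immaterial since any fixed polynomial exponent suffices for the downstream application.
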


\begin{proof}
We will show for any $\epsilon > 0$, there exists a finite set $\mathbf{E}$ of $d\times d$ density operators, such that for every density operator $\rho$, there exists $\sigma \in \mathbf{E}$ with $\|\rho-\sigma \|_{\mbox{\scriptsize{tr}}}< \epsilon $ (we call $\mathbf{E}$ a $\epsilon$-net). To do this, first we recall that every density operator of dimension $d$ has a purification by introducing an ancillary $d$-dimensional space $R$ . And in \cite{Hayden2004randomizing}, it was shown that for pure states of dimension $d^2$, there exists a $\epsilon$-net $\mathbf{F}$ with cardinality $|\mathbf{F}|\leq \left(\frac{5}{2\epsilon}\right)^{2d^2}\equiv M$. We can then choose set $\mathbf{E}$ to be $tr_R{\mathbf{F}}$, which is the partial trace of each element of set $\mathbf{F}$. Since partial trace is a contractive operation \cite{Nielsen2000quantum}, i.e. $\|tr_R(\mu - \tau)\|_{\mbox{\scriptsize{tr}}}\leq \|\mu - \tau\|_{\mbox{\scriptsize{tr}}}$, we know that set $\mathbf{E}$ obtained this way is indeed an $\epsilon$-net.

Note that $|\mathbf{E}|=|\mathbf{F}|=M$. Now if there are more then $M$ density operators, then there must be two density operators $\sigma_j$, $\sigma_k$ that are $\epsilon$-close to the same element of $\mathbf{E}$. Thus by triangle inequality $\|\sigma_j-\sigma_k\|_{\mbox{\scriptsize{tr}}}<2\epsilon$. The proof can be finished by a rescaling of $\epsilon$.
\end{proof}

\begin{theorem}
Consider an arbitrary $d$ and an arbitrary constant $p>1/2$. For sufficiently large $n$, the inner product function $f_{\mbox{\scriptsize{ip}}}$ is not computable by any two-local commuting circuit.
\end{theorem}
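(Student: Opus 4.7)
The plan is to argue by contradiction, combining the structural reduction used in the proof of Theorem~\ref{thm_simulation_two_local} with a CP-map version of Lemma~\ref{lemma:compactset}. Suppose some two-local commuting circuit $\mathcal{C}$ on $m\geq 2n$ qudits computes $f_{\mbox{\scriptsize{ip}}}$ with probability $\geq p>1/2$ on every input $|x^a,x^b,0\rangle$, and (after passing to standard form) denote its gates by $U_{jk}$. Exactly as in the proof of Theorem~\ref{thm_simulation_two_local}, when computing the reduced state $\rho(x^a,x^b)$ on qudit $1$, every gate not touching qudit $1$ commutes through and cancels with its adjoint, so only the gates $U_{1j}$ for $j\geq 2$ remain. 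For each $j\geq 2$ and each $t\in\mathbf{Z}_d$ define the single-qudit CP map $\Phi_j^{(t)}(\tau):=\mathrm{Tr}_j[U_{1j}(\tau\otimes|t\rangle\langle t|)U_{1j}^{\dagger}]$. A short commutation check using $[U_{1j},U_{1k}]=0$ shows $\Phi_j^{(t)}\circ\Phi_k^{(s)}=\Phi_k^{(s)}\circ\Phi_j^{(t)}$ whenever $j\neq k$, so one can freely reorder the composition and group the factors by input type. This yields a factorization
\[
\rho(x^a,x^b)\;=\;\mathcal{E}(x^a_{\neq 1})\bigl(\sigma(x^a_1,x^b)\bigr),
\]
with $\mathcal{E}(x^a_{\neq 1}):=\Phi_n^{(x^a_n)}\circ\cdots\circ\Phi_2^{(x^a_2)}$ a CP map depending only on $x^a_{\neq 1}=(x^a_2,\dots,x^a_n)$, and $\sigma(x^a_1,x^b)$ a qudit state built from the remaining commuting factors applied to $|x^a_1\rangle\langle x^a_1|$.

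Next I would apply a CP-map version of Lemma~\ref{lemma:compactset}. Via the Choi--Jamiolkowski isomorphism, every CPTP map on a $d$-dimensional system corresponds to a density operator on a $d^2$-dimensional space, so Lemma~\ref{lemma:compactset} at dimension $d^2$ guarantees that any family of more than $(5/\delta)^{2d^4}$ such CPTP maps contains two whose Choi operators are within $\delta$ in trace distance. Since $x^a_{\neq 1}$ takes $d^{n-1}$ values, for all $n$ above an explicit threshold $n_0(d,\delta)$ there exist $x^a_{\neq 1}\neq y^a_{\neq 1}$ with $\|J(\mathcal{E}(x^a_{\neq 1}))-J(\mathcal{E}(y^a_{\neq 1}))\|_{\mbox{\scriptsize{tr}}}\leq\delta$. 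Because the Choi trace distance and the diamond norm differ by at most a polynomial factor in $d$, choosing $\delta$ small enough (depending only on $d$ and $p$) produces the uniform bound $\|\mathcal{E}(x^a_{\neq 1})(\tau)-\mathcal{E}(y^a_{\neq 1})(\tau)\|_{\mbox{\scriptsize{tr}}}<(2p-1)/2$ for every qudit state $\tau$.

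Specializing $\tau=\sigma(x^a_1,x^b)$ then yields $\|\rho((x^a_1,x^a_{\neq 1}),x^b)-\rho((x^a_1,y^a_{\neq 1}),x^b)\|_{\mbox{\scriptsize{tr}}}<2p-1$ for every $x^a_1\in\mathbf{Z}_d$ and every $x^b\in\mathbf{Z}_d^n$. If the targets $f_{\mbox{\scriptsize{ip}}}((x^a_1,x^a_{\neq 1}),x^b)$ and $f_{\mbox{\scriptsize{ip}}}((x^a_1,y^a_{\neq 1}),x^b)$ differed, the standard-basis measurement on qudit $1$ would have to return each outcome with probability $\geq p$ on its respective reduced state, forcing the trace distance between the two states to be at least $2p-1$ and contradicting the bound just derived. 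Hence $f_{\mbox{\scriptsize{ip}}}$ must take the same value on the two inputs for every $(x^a_1,x^b)$, which expands to $\sum_{j\geq 2}(x^a_j-y^a_j)\,x^b_j\equiv 0\pmod d$ for all $x^b\in\mathbf{Z}_d^n$. Substituting $x^b=e_j$ for each $j\geq 2$ forces $x^a_j=y^a_j$, contradicting $x^a_{\neq 1}\neq y^a_{\neq 1}$.

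The main obstacle is the second step: lifting the $\epsilon$-net of Lemma~\ref{lemma:compactset} from density operators to CPTP maps and controlling the loss when converting Choi trace distance to worst-case output distance. Since $d$ is fixed, the inflated net exponent $d^4$ and the polynomial-in-$d$ constants in the norm conversion only enlarge the threshold $n_0(d,p)$ by an additive constant and do not obstruct the desired ``for sufficiently large $n$'' conclusion; the remainder of the argument is a clean marriage of pigeonhole with the bilinearity of $f_{\mbox{\scriptsize{ip}}}$.
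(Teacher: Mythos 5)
Your argument is sound, but it takes a needlessly harder path than the paper at the pigeonhole step. Both proofs begin identically: strip out the gates not touching qudit~1, observe that the surviving gates $U_{1j}$ ($j\geq 2$) commute, and exploit two-locality to view the reduced state on qudit~1 as the output of a commuting sequence of single-qudit channels $\Phi_j^{(\cdot)}$. The crucial divergence is in how you split the index set. You place the qudits $2,\dots,n$ on the \emph{channel} side, getting a family of CPTP maps $\mathcal{E}(x^a_{\neq 1})$ indexed by $x^a_{\neq 1}$, and therefore need a channel-level $\epsilon$-net---which you build from Lemma~\ref{lemma:compactset} via the Choi--Jamiolkowski isomorphism on $d^2$ dimensions, followed by a Choi-to-induced-norm conversion costing a $\mathrm{poly}(d)$ factor. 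The paper instead places qudits $2,\dots,n$ on the \emph{state} side: it first computes $\sigma(x^a):=\mathrm{Tr}_{n\dots 2}\,\mathcal{C}_a|x^a\rangle\langle x^a|\mathcal{C}_a^\dagger$, a single-qudit state depending on all of $x^a$, and then views the remaining gates $U_{1j}$ with $j>n$ (which see only $x^b$ and the ancillas) as a \emph{fixed} CPTP map parametrized by $x^b$ alone. With this split, Lemma~\ref{lemma:compactset} applies directly to the $d^n$ density operators $\sigma(x^a)$, and contractivity of trace distance under CPTP maps immediately transports the $\epsilon$-closeness to the final reduced states $\rho(x^a,x^b)$, with no Choi matrices, no diamond norm, and no norm-conversion constants. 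Your version works because $d$ is fixed so the extra polynomial factors are harmless, and the commutation check for the $\Phi_j^{(t)}$ is correct; but the paper's choice of which variable ($x^a$) to absorb into the state rather than the channel removes all of that overhead.
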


\begin{proof} Suppose there exists an $m$-qudit quantum circuit ${\cal C}$, for some $m\geq 2n$, which computes $f$ with probability $p>1/2$. We show that this leads to a contradiction.  Repeating the argument of theorem \ref{thm_simulation_two_local} we can remove all gates from the circuit which do not act on qudit 1. We  denote this simplified circuit again by ${\cal C}$. Now write ${\cal C}= {\cal C}_b {\cal C}_a$, where ${\cal C}_a$ consists of all gates in the circuit acting on qudits $\{1, i\}$ with $i=1\dots n$ and where ${\cal C}_b$ consists of all gates  acting on qudits $\{1, j\}$ with $i=n+1\dots m$. Furthermore, let $x=(x^a, x^b)$ be an arbitrary input of $f$. Finally, denote \be \sigma(x^a):= \mbox{ Tr}_{n\dots 2} \ {\cal C}_a|x^a\rangle\langle x^a|{\cal C}_a^{\dagger},\ee which is the reduced density operator for qudit 1 of the state ${\cal C}_a|x^a\rangle$.

The final state of the entire circuit is ${\cal C}|x,0\rangle$ where $0$ denotes a string of $m-n$ zeroes. With the notations above, the reduced density operator of the first qudit is \be\label{rho_xa_xb} \rho(x^a, x^b) &:=& \mbox{ Tr}_{m\dots 2}\ {\cal C}|x,0\rangle\langle x,0|{\cal C}^{\dagger}= \mbox{ Tr}_{m\dots n+1}\mbox{ Tr}_{n\dots 2} \ {\cal C}_b{\cal C}_a|x,0\rangle\langle x,0|{\cal C}_a^{\dagger}{\cal C}_b^{\dagger} \nonumber\\
&=&  \mbox{ Tr}_{m\dots n+1}\ {\cal C}_b \left\{\sigma(x^a)\otimes |x^b, 0\rangle\langle x^b, 0|\right\} {\cal C}_b^{\dagger}\ee
We now use lemma \ref{lemma:compactset}. This implies for every $\epsilon>0$ there exists an $n$ sufficiently large and two $n$-tuples $x^a\neq y^a$ such that $\| \sigma(x^a)-\sigma(y^a)\|_{\mbox{\scriptsize{tr}}}\leq \epsilon$. Using (\ref{rho_xa_xb}) and the fact that the trace norm is contractive, it follows that $\| \rho(x^a, x^b)-\sigma(y^a, x^b)\|_{\mbox{\scriptsize{tr}}}\leq \epsilon$ for \emph{every} $n$-tuple $x^b$! This implies the following: if a standard basis measurement on $\rho(x^a, x^b)$ yield some outcome $u$ with probability $p(u)$, then standard basis measurement on  $\rho(y^a, x^b)$ will yield the same outcome with probability $q(u)$ where $|p(u)-q(u)|\leq \epsilon$.  Setting $\epsilon = p - \frac{1}{2}$ and using that ${\cal C}$ computes $f$ with probability at least $p$, it then follows that $f(x^a, x^b)= f(y^a, x^b)$ for all $x^b$. Using the definition of $f$, this straightforwardly implies that $x^a=y^a$, thus leading to a contradiction. \end{proof}

\section{$3$-Local commuting circuits are hard}

Next we show that strong simulations of 3-local commuting circuits are unlikely to exist. 

\begin{theorem}[{\bf Hardness of simulating 3-local commuting circuits}]\label{thm_hardness_3_local}
Let ${\cal C}$ be a uniform family of $n$-qubit 3-local commuting quantum circuits acting on the input $|0\rangle$ and followed by $Z$ measurement of the first qubit. If all such circuits could be efficiently simulated classically in the strong sense then every problem in $\#P$ has a polynomial time algorithm.
\end{theorem}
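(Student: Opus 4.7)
The plan is to reduce an $\#P$-hard counting problem to the task of strongly simulating a suitable 3-local commuting circuit, exploiting the exponential accuracy of a strong simulation to decode the encoded quantity. The construction mimics the IQP (``instantaneous quantum polynomial-time'') framework of \cite{Bremner2010Classical}, reinterpreted as a genuinely commuting circuit.

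Given any 3-local diagonal unitary $D_j$ with support $S_j\subseteq\{1,\dots,n\}$ of size at most 3, I would define the gate $G_j := H^{\otimes n} D_j H^{\otimes n}$. Since Hadamards on qubits outside $S_j$ cancel, $G_j$ is itself 3-local and supported on $S_j$; and any two such gates $G_j,G_k$ commute because both are diagonal in the common $X$-basis. Hence ${\cal C}=\prod_j G_j$ is a legitimate 3-local commuting quantum circuit. Acting on $|0\rangle^{\otimes n}$ it produces $H^{\otimes n} D|+\rangle^{\otimes n}$ with $D=\prod_j D_j$ diagonal, so writing $D_{xx}=e^{i\phi(x)}$ for a 3-local real polynomial $\phi$ and using $H^{\otimes n}Z_1 H^{\otimes n}=X_1$, a direct calculation gives
\[
\langle Z_1\rangle \;=\; \langle+|^{\otimes n}D^\dagger X_1 D|+\rangle^{\otimes n} \;=\; \frac{1}{2^{n-1}}\sum_{y\in\{0,1\}^{n-1}}\cos\bigl(\Delta\phi(y)\bigr),
\]
where $\Delta\phi(y):=\phi(1,y)-\phi(0,y)$.

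The next step is to choose $\phi(x)=x_1\,\psi(x_2,\dots,x_n)$ for a 2-local real polynomial $\psi$ on the remaining $n-1$ bits; each summand $J_{ij}x_i x_j$ of $\psi$ then contributes a 3-local cubic term $J_{ij}x_1 x_i x_j$ in $\phi$, so $\phi$ is indeed 3-local and $\Delta\phi(y)=\psi(y)$. Consequently
\[
\langle Z_1\rangle \;=\; \frac{1}{2^{n-1}}\sum_{y\in\{0,1\}^{n-1}}\cos\bigl(\psi(y)\bigr).
\]
The right-hand side is (up to the easy normalization $2^{1-n}$) the real part of a 2-local Ising-type partition function at imaginary temperature, whose exact evaluation at appropriate rational coefficients is $\#P$-hard; this may be imported from the $\#P$-hardness of exact IQP output probabilities established in \cite{Bremner2010Classical}. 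Picking the coefficients $J_{ij},h_i$ with denominators bounded by $\mathrm{poly}(n)$ makes $\sum_y\cos(\psi(y))$ a rational number whose exact value is pinned down by any additive approximation of error $2^{-\mathrm{poly}(n)}$, which is precisely the precision afforded by a strong simulation. A strong simulation of every 3-local commuting circuit therefore places the $\#P$-hard target inside $\mathrm{FP}$, yielding the claim.

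The main obstacle is selecting the precise $\#P$-hard target and verifying both that $\sum_y\cos(\psi(y))$ on a 2-local $\psi$ is indeed $\#P$-hard, and that rational precision is enough to read off the exact value from a strong simulation. Both points are fairly standard once the above encoding is in place, and the hardness step can be imported essentially verbatim from the analogous reduction in \cite{Bremner2010Classical} (which handles a closely related IQP amplitude); the only adaptation is the identification of the IQP circuit with our 3-local commuting ${\cal C}$, which was the content of the construction above.
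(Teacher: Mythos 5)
Your 3-local construction is, at bottom, the paper's Hadamard test viewed in the IQP picture: writing $\phi(x)=x_1\psi(x_2,\dots,x_n)$ makes $D=|0\rangle\langle0|\otimes I+|1\rangle\langle1|\otimes D'$ a controlled diagonal, and conjugating by $H^{\otimes n}$ folds the control into the 3-local $X$-diagonal gates exactly as the paper's gates $U_i=(H\otimes I)\,CG_i\,(H\otimes I)$ do. Your identity $\langle Z_1\rangle=2^{1-n}\sum_y\cos\psi(y)=\operatorname{Re}\langle0|{\cal C}''|0\rangle$ (for ${\cal C}''$ the $(n-1)$-qubit 2-local IQP circuit with phase $\psi$) is the same relation as the paper's $p(0)=\tfrac12(1+\operatorname{Re}\langle0|{\cal C}|0\rangle)$, just restricted to the case that the 2-local commuting circuit is $X$-diagonal. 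So the reduction machinery is essentially the paper's.

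The gap is in the hardness import. You appeal to ``the $\#P$-hardness of exact IQP output probabilities'' in \cite{Bremner2010Classical}, but an output probability is $|\langle0|{\cal C}''|0\rangle|^2$, whereas your circuit hands you $\operatorname{Re}\langle0|{\cal C}''|0\rangle$, a genuinely different quantity; knowing one to exponential precision does not by itself give the other. What you actually need is the $\#P$-hardness of $\operatorname{Re}\langle0|{\cal C}|0\rangle$ for 2-local commuting ${\cal C}$, and the paper builds exactly that: Lemma~\ref{thm_hardness_ME} shows $\langle0|{\cal U}|0\rangle$ for a universal circuit is within $\delta=2^{-n^2}$ of the \emph{real} number $s(f)/2^n$ (so the real part alone pins down the $\#P$-hard count $s(f)$), and Lemma~\ref{thm_postselect} propagates this, with a benign $\sqrt2^{\,k}$ factor, to $\operatorname{Re}\langle0|{\cal C}|0\rangle$ for a 2-local commuting ${\cal C}$. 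Relatedly, your claim that ``picking rational $J_{ij},h_i$ makes $\sum_y\cos(\psi(y))$ a rational number'' is not right ($\cos$ of a rational multiple of $\pi$ is typically irrational); the correct statement, as in Lemma~\ref{thm_hardness_ME}, is that the target is \emph{exponentially close} to the rational $s(f)/2^n$, and that proximity is what lets a strong simulation recover $s(f)$ exactly. Filling these two points (deriving real-part hardness via the $s(f)/2^n$ argument, and fixing the rationality wording) turns your sketch into the paper's proof.
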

In other words, there is a drastic increase in complexity in the seemingly innocuous transition from 2-local to 3-local gates. Remark that hardness already holds for the simplest case i.e. qubit systems---even though $d$-level 2-local commuting circuits have efficient simulations for any $d$. Hardness of strong simulations does not necessarily imply that weak simulations are hard as well since strong and weak simulations are generally inequivalent concepts (cf. \cite{VandenNest2010Simulating} for a discussion). In section \ref{sect_power_of_commuting} we will provide evidence that $k$-local commuting circuits with constant $k$  can efficiently perform certain tasks that appear to be nontrivial for classical computers, thereby providing evidence that efficient weak simulations might not exist in general.

The proof of theorem \ref{thm_hardness_3_local} is given below. Our approach is to relate simulations of 3-local
commuting circuits to the evaluation of matrix elements of \emph{universal}  unitary
quantum circuits, which is known to be hard. The following three lemmata collect preliminary results.  First we recall that  the evaluation of  matrix elements of  universal quantum circuits is known to be hard. We denote $S:=$ diag$(1, e^{i\pi/4})$ and $CZ:=$ diag$(1, 1, 1, -1)$.

\begin{lemma}\label{thm_hardness_ME}
Let ${\cal U}$ be a uniform family of $n$-qubit quantum circuits  composed of the gates $H$, $S$ and $CZ$. If there existed an algorithm with runtime poly$(n, \log \frac{1}{\epsilon})$  which  outputs an $\epsilon$-approximation of $\langle 0| {\cal U}|0\rangle$  for any such circuit family,  then every problem in $\#P$ has a polynomial-time algorithm.
\end{lemma}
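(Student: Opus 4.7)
The plan is to reduce the $\#P$-complete problem $\#\mathrm{SAT}$ to strong amplitude approximation. Given a Boolean formula $f:\{0,1\}^n\to\{0,1\}$ of polynomial size, the goal is to compute $\#f := |\{x : f(x)=1\}|$ from a sufficiently accurate approximation of $\langle 0|{\cal U}|0\rangle$ for some circuit ${\cal U}$ over $\{H,S,CZ\}$ that I construct from $f$. The key observation is that $S = \mathrm{diag}(1,e^{i\pi/4})$ is the standard $T$ gate, so $\{H,S,CZ\}$ generates the Clifford+$T$ set. Since $\mathrm{CNOT} = (I\otimes H)\,CZ\,(I\otimes H)$ and Toffoli admits an exact decomposition into $H$, $T$ and CNOT, every classical reversible polynomial-time computation can be implemented exactly by a polynomial-size quantum circuit over $\{H,S,CZ\}$.

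First I would build a reversible polynomial-size circuit $V_f$ over $\{H,S,CZ\}$ (with $m=\mathrm{poly}(n)$ ancillas) that maps $|x\rangle|0\rangle^m \mapsto |x\rangle|f(x)\rangle|g(x)\rangle$, where the auxiliary garbage $g(x)$ is then uncomputed by a standard Bennett-style construction so that effectively we obtain $|x\rangle|0\rangle^m \mapsto |x\rangle|f(x)\rangle|0\rangle^{m-1}$. Next, define ${\cal U}$ as: apply $H^{\otimes n}$ on the first $n$ qubits, then $V_f$, then a $Z$ gate on the ancilla carrying $f(x)$ (obtained as $Z = S^4$), then $V_f^\dagger$, then $H^{\otimes n}$ again. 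A direct tracking of the wavefunction gives
\begin{equation}
\langle 0|{\cal U}|0\rangle \;=\; \frac{1}{2^n}\sum_{x\in\{0,1\}^n}(-1)^{f(x)} \;=\; 1 - \frac{2\,\#f}{2^n},
\end{equation}
so that $\#f = 2^{n-1}\bigl(1-\langle 0|{\cal U}|0\rangle\bigr)$.

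Finally, by applying the hypothetical simulation algorithm with precision $\epsilon = 2^{-n-2}$, I obtain a number $E$ with $|E - \langle 0|{\cal U}|0\rangle|\leq 2^{-n-2}$, and hence an approximation $2^{n-1}(1-E)$ of $\#f$ to additive error at most $1/4 < 1/2$. Since $\#f$ is a nonnegative integer, rounding recovers $\#f$ exactly. The runtime is $\mathrm{poly}(n,\log(1/\epsilon)) = \mathrm{poly}(n)$, placing every $\#P$ problem in polynomial time.

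The only real obstacles are bookkeeping ones: first, verifying that Toffoli and $X$ (hence any reversible classical circuit) really can be compiled \emph{exactly} into $\{H,S,CZ\}$ rather than merely approximately, which is standard for Clifford+$T$; and second, being careful that the ancilla uncomputation leaves the garbage register clean so that the inner product collapses to the claimed sum over $x$ rather than picking up unwanted overlaps. Both are routine but must be stated explicitly for the amplitude identity to hold exactly.
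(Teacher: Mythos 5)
Your proof is correct, and it shares the paper's high-level strategy (reduce a $\#P$-complete counting problem to strong approximation of a matrix element, using the observation that an exponentially accurate estimate of a quantity of the form $\#f/2^n$ pins down the integer $\#f$). The technical route differs in a genuinely useful way, though. The paper prepares the $(n+1)$-qubit state $|f\rangle = 2^{-n/2}\sum_x|x,f(x)\rangle$, observes $\langle 0|\mathcal{H}|f\rangle = s(f)/2^n$ where $\mathcal{H}$ is Hadamard on the first $n$ qubits, and then invokes the Solovay--Kitaev theorem to approximate the preparation circuit over $\{H,S,CZ\}$ to precision $\delta = 2^{-n^2}$, tracking this extra approximation error through the final inequality. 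You instead use the fact that $S$ here is the $T$ gate, so that Toffoli has an \emph{exact} decomposition into $\{H,S,CZ\}$ (via $\mathrm{CNOT}=(I\otimes H)CZ(I\otimes H)$ and $Z=S^4$, $X=HZH$), and you route the counting through a Deutsch--Jozsa-style phase-kickback circuit $H^{\otimes n}V_f^\dagger Z_{\mathrm{anc}} V_f H^{\otimes n}$, giving the identity $\langle 0|\mathcal{U}|0\rangle = 1 - 2\#f/2^n$ \emph{exactly}. This avoids any reliance on Solovay--Kitaev (and the attendant need to argue uniformity and efficiency of the approximating circuit), making the argument cleaner and entirely exact; the paper's route is slightly shorter to state but carries one extra approximation to control. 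The only slip: with $\epsilon = 2^{-n-2}$ the resulting additive error on $\#f$ is $2^{n-1}\epsilon = 1/8$, not $1/4$, but since both are below $1/2$ the rounding argument is unaffected.
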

\begin{proof} Consider an efficiently computable Boolean function $f:\{0, 1\}^n\to \{0, 1\}$. Let $s(f)$ denote the number of bit strings $x$ satisfying $f(x)=0$. The problem of computing $s(f)$ is well known to $\#$P-complete. Now define the $(n+1)$-qubit state $|f\rangle:= 2^{-n/2}\sum_x |x, f(x)\rangle$ where the sum is over all $n$-bit strings $x$.  Let ${\cal H}$ be the operator which acts as $H$ on qubits 1 to $n$ and as the identity on qubit $n+1$. Then an easy calculation shows \be\label{s_f} \langle 0| {\cal H}|f\rangle = s(f)/2^n.\ee
Since $H$, $CZ$ and $S$ form a universal gate set,  the Solovay-Kitaev theorem  implies that there  exists a uniform circuit family ${\cal V}$  composed of these gates such that ${\cal V}|0\rangle$ is $\delta$-close to $|f\rangle$ with $\delta:= 2^{-n^2}$. Denote the circuit ${\cal U}:={\cal H}{\cal V}$.  Using (\ref{s_f}) it  follows that \be\label{sharp_P_universal} |\langle 0| {\cal U}|0\rangle - \frac{s(f)}{2^n}|\leq \delta.\ee
Now suppose that there exists a poly$(n, \log\frac{1}{\epsilon})$ classical algorithm to compute $\langle 0| {\cal U}|0\rangle$ with accuracy $\epsilon$. Setting $\epsilon = \delta$, this would imply the existence of a polynomial time classical algorithm that outputs an $\delta$-approximation $\gamma$ of $\langle 0| {\cal U}|0\rangle$. Using (\ref{sharp_P_universal}) and the triangle inequality this implies that $\gamma$ approximates $s(f)/2^n$ with accuracy $2\delta$. Since $s(f)/2^n = k/2^n$ for some integer between 0 and $2^n$,   this accuracy would allow to compute $s(f)$ exactly in polynomial time, hence implying that  every problem in $\#$P has a poly-time algorithm. 
\end{proof}

Second, we recall a result from \cite{Bremner2010Classical} which relates universal quantum circuits to post-selected 2-local commuting circuits.

\begin{lemma}\label{thm_postselect}
Let ${\cal U}$ be an $n$-qubit quantum circuit  composed of the gates $H$, $S$ and $CZ$ and denote $|\psi\rangle = {\cal U}|0\rangle^n$. Then there exists a 2-local commuting circuit ${\cal C}$ on $k+n$ qubits such that $|\psi\rangle$ is obtained by postselecting ${\cal C}|0\rangle^{k+n}$ on the first $k$ qubits; more precisely \be\label{postselect} |0\rangle^{k} |\psi\rangle = \sqrt{2}^k{\cal P}{\cal C}|0\rangle^{k+n}.\ee Here ${\cal P}$ denotes the projector $|0\rangle\langle 0|$ acting on the first $k$ qubits. Furthermore $k=$ poly$(n)$ and the description of ${\cal C}$ can be computed efficiently on input of the description of ${\cal U}$.
\end{lemma}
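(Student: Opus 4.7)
The plan is to apply the gadget-based construction underlying the universality of postselected IQP from Bremner, Jozsa and Shepherd \cite{Bremner2010Classical}. I would take $\mathcal{C}$ to consist exclusively of 2-local gates that are jointly diagonal in the $X$-basis, together with $k$ fresh ancilla qubits that are ultimately post-selected onto $|0\rangle$. Since all such gates are simultaneously diagonalised by $H^{\otimes(k+n)}$, they pairwise commute, making $\mathcal{C}$ a 2-local commuting circuit in the paper's sense. The post-selected branch of $\mathcal{C}|0\rangle^{k+n}$ is then precisely a postselected 2-local IQP computation, which by \cite{Bremner2010Classical} is powerful enough to emulate any circuit built from $\{H, S, CZ\}$.

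To build $\mathcal{C}$ from $\mathcal{U}$ I would process the gates of $\mathcal{U}$ one at a time. Each $CZ$ gate is replaced by its $X$-basis image $(H\otimes H)CZ(H\otimes H)$, which is a 2-local $X$-diagonal unitary. Each $H$ and each $S$ is replaced by a constant-size teleportation gadget consisting of a short sequence of 2-local $X$-diagonal gates acting on the current data wire together with one (or a few) fresh ancillas; the gadget is designed so that, conditional on those ancillas being projected onto $|0\rangle$, its net effect on the data wire is exactly the target single-qubit gate up to a fixed amplitude $1/\sqrt{2}$ per post-selected ancilla. After processing all of $\mathcal{U}$, the assembled $\mathcal{C}$ has $k = O(\mathrm{size}(\mathcal{U})) = \mathrm{poly}(n)$ ancillas, every gate of $\mathcal{C}$ is 2-local and $X$-diagonal, and the total amplitude of the post-selected branch is $1/\sqrt{2}^k$, which is compensated by the $\sqrt{2}^k$ prefactor in (\ref{postselect}). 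Because the replacement acts locally on the gate sequence of $\mathcal{U}$, the description of $\mathcal{C}$ is efficiently computable from that of $\mathcal{U}$.

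The main obstacle is exhibiting correct $X$-basis-diagonal gadgets for $H$ and $S$. A naive single-gate gadget does not suffice: if one applies a single 2-qubit $X$-diagonal gate between a data wire and one ancilla and then projects the ancilla onto $|0\rangle$, a direct computation shows that the effective operation on the data wire is necessarily itself $X$-diagonal, so it cannot implement a non-$X$-diagonal single-qubit gate such as $H$. The remedy is to use a short cascade of 2-local $X$-diagonal gates (possibly with more than one ancilla) per gadget, where the interplay between the gates and the projective post-selection produces the desired non-diagonal effective unitary. Verifying the gadget identity for each of the finitely many non-commuting gate types $H$ and $S$ reduces to an explicit, constant-dimensional matrix calculation, carried out in \cite{Bremner2010Classical}. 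Once those identities are in hand, the global identity (\ref{postselect}) follows by induction on the number of gates of $\mathcal{U}$, since commutativity of $\mathcal{C}$ (all gates being jointly $X$-diagonal) allows the gadgets to be composed in any order without interference.
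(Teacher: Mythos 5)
The paper does not prove this lemma; it simply cites it from Bremner, Jozsa and Shepherd, and your reconstruction is mostly on the right track (the gate set, conjugating $CZ$ into the $X$-diagonal picture, padding with post-selected ancillas to recover the $\sqrt{2}^k$ amplitude). However, there is a genuine gap in the way you propose to build the $H$-gadget, and it is worth spelling out because the obstruction you yourself observe actually rules out your proposed remedy as well.

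You correctly note that if you couple the data wire to a single ancilla via one $2$-local $X$-diagonal gate and then project the \emph{ancilla} onto $|0\rangle$, the induced map on the data wire is necessarily $X$-diagonal. Your fix is ``a short cascade of $2$-local $X$-diagonal gates (possibly with more than one ancilla),'' still projecting the ancillas onto $|0\rangle$. This does not escape the obstruction: since all gates in the cascade are $X$-diagonal, they commute and their product is a single $X$-diagonal unitary $G = H^{\otimes} D H^{\otimes}$ with $D$ $Z$-diagonal, and then
\[
\langle 0|_{\mathrm{anc}}^{\otimes r}\, G\, |0\rangle_{\mathrm{anc}}^{\otimes r}
\;=\; H^{\otimes}_{\mathrm{data}}\Big(\langle +|_{\mathrm{anc}}^{\otimes r}\, D\, |+\rangle_{\mathrm{anc}}^{\otimes r}\Big) H^{\otimes}_{\mathrm{data}},
\]
which is again $X$-diagonal on the data wire because the partial expectation of a $Z$-diagonal operator in $|+\rangle$ states is $Z$-diagonal. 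No finite cascade can produce $H$ this way, no matter how many ancillas you use, as long as the ancillas are what you post-select.

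The missing idea, and what \cite{Bremner2010Classical} actually do, is a \emph{gate-teleportation} gadget in which the data wire migrates: you prepare a fresh ancilla in $|+\rangle$ (equivalently, in $|0\rangle$ before the global Hadamard conjugation), apply $CZ$ between it and the current data qubit, and then post-select the \emph{old data qubit} (not the ancilla) onto $|+\rangle$; the resulting state on the ancilla is $\tfrac{1}{\sqrt 2}H$ applied to what was on the data wire. Thus a single $2$-local $X$-diagonal gate and a single post-selected qubit do suffice for $H$; the point is that the post-selected qubits are the \emph{retired} data wires and the logical state advances along a chain of fresh qubits. After processing $\mathcal{U}$, the $k = \mathrm{poly}(n)$ post-selected qubits can be relabelled to be the first $k$ wires, with the surviving $n$ logical wires last, giving exactly the normalisation and form of equation~(\ref{postselect}). (The $S$ and $CZ$ gates, being $Z$-diagonal, can be absorbed directly into the $Z$-diagonal layer or, equivalently, applied as $X$-diagonal gates after the Hadamard conjugation, so they need no teleportation and no ancillas.)
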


Combining the above two lemmata shows that approximating matrix elements of commuting 2-local circuits is hard.
\begin{lemma}
Let ${\cal C}$ be a uniform family of $n$-qubit 2-local commuting quantum circuits. If there existed a classical algorithm with runtime poly$(n,\log \frac{1}{\epsilon})$  which outputs an $\epsilon$-approximation of $\langle 0| {\cal C}|0\rangle$  for any such ${\cal C}$,  then every problem in $\#P$ has a poly-time algorithm.
\end{lemma}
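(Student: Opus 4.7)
My plan is to chain the two preceding lemmata. Lemma \ref{thm_hardness_ME} says that approximating the matrix element $\langle 0|{\cal U}|0\rangle$ of a universal circuit ${\cal U}$ (built from $H$, $S$, $CZ$) with polynomial precision in its logarithm is $\#P$-hard. Lemma \ref{thm_postselect} says that the output state of any such ${\cal U}$ is obtained from a 2-local commuting circuit ${\cal C}$ by post-selecting $k=$ poly$(n)$ ancilla qubits on $|0\rangle$. The key calculation is to observe that this post-selection relation translates directly into an exact identity between the two matrix elements.

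Concretely, starting from (\ref{postselect}), I would take the inner product of both sides with $\langle 0|^{k+n}$. On the left this yields $\langle 0|^n|\psi\rangle = \langle 0|{\cal U}|0\rangle$ (the ancilla part $\langle 0|^k|0\rangle^k$ equals $1$), and on the right the projector ${\cal P}$ acts trivially on $\langle 0|^{k+n}$, leaving $\sqrt{2}^{\,k}\langle 0|{\cal C}|0\rangle$. Therefore
\begin{equation}
\langle 0|{\cal C}|0\rangle \;=\; 2^{-k/2}\,\langle 0|{\cal U}|0\rangle.
\end{equation}
Thus any $\epsilon$-approximation $\gamma$ of $\langle 0|{\cal C}|0\rangle$ yields the approximation $2^{k/2}\gamma$ of $\langle 0|{\cal U}|0\rangle$ with accuracy $2^{k/2}\epsilon$.

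The reduction is now straightforward: given an instance ${\cal U}$ and a target accuracy $\epsilon'$ for $\langle 0|{\cal U}|0\rangle$, I invoke Lemma \ref{thm_postselect} to compute the description of ${\cal C}$ in polynomial time, then run the hypothetical strong-simulation algorithm for 2-local commuting circuits with precision $\epsilon = 2^{-k/2}\epsilon'$. Since $k=$ poly$(n)$, the requested $\log(1/\epsilon) = k/2 + \log(1/\epsilon')$ is still polynomial in $n$ and $\log(1/\epsilon')$, so the total runtime remains poly$(n,\log(1/\epsilon'))$. Together with Lemma \ref{thm_hardness_ME}, this yields a poly-time algorithm for any problem in $\#P$.

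The only subtle point is the blow-up factor $2^{k/2}$, but because the hypothetical simulator has runtime \emph{logarithmic} in $1/\epsilon$, we can simply demand exponentially better precision at only polynomial cost, which neutralises the blow-up. Everything else (computability of the description of ${\cal C}$, preservation of the input $|0\rangle$) is provided directly by Lemma \ref{thm_postselect}, so no genuine obstacle arises beyond verifying the matrix-element identity above.
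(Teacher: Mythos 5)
Your proof is correct and follows essentially the same route as the paper: chaining Lemma \ref{thm_hardness_ME} with Lemma \ref{thm_postselect} via the identity $\langle 0|{\cal U}|0\rangle = \sqrt{2}^{\,k}\langle 0|{\cal C}|0\rangle$ obtained by projecting (\ref{postselect}) onto $\langle 0|^{k+n}$. The paper states the consequence more tersely, while you spell out the $2^{k/2}$ precision blow-up and correctly note that it is absorbed because the hypothetical simulator's runtime is logarithmic in $1/\epsilon$.
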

\begin{proof} Let ${\cal U}$ be a uniform family of $n$-qubit quantum circuits composed of the gates $H$, $S$ and $CZ$ and  let ${\cal C}$ be the associated commuting circuit family as in lemma \ref{thm_postselect}. Using (\ref{postselect}) one finds \be \langle 0|^n{\cal U}|0\rangle^n = \sqrt{2}^k \langle 0|^{n+k}{\cal C}|0\rangle^{n+k}.\ee If an efficient classical algorithm existed to estimate $\langle 0| {\cal C}|0\rangle$ with exponential precision, then there also exists an algorithm to estimate $\langle 0|{\cal U}|0\rangle $ with exponential precision. This implies that every problem in $\#P$ has a poly-time algorithm owing to lemma \ref{thm_hardness_ME}.
\end{proof}

The proof of theorem \ref{thm_hardness_3_local} now proceeds by relating the simulation of 3-local commuting circuits to the evaluation of matrix elements of 2-local commuting circuits, via the Hadamard test.

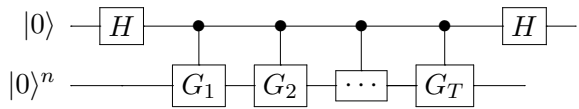
\begin{figure}
\[
\Qcircuit @C=1em @R=.7em {
\lstick{\qs{0}} & \gate{H} & \ctrl{1} & \ctrl{1} & \ctrl{1} &\ctrl{1}& \gate{H}& \qw \\
\lstick{\qs{0}^n} & \qw & \gate{G_1} & \gate{G_2}& \gate{\cdots}& \gate{G_T}\qw & \qw
}
\]
\caption[]{The Hadamard test}\label{fig_hadamard}
\end{figure}

{\bf Proof of theorem \ref{thm_hardness_3_local}:} Suppose that an efficient algorithm existed to strongly simulate the circuits described in the theorem.  Consider an arbitrary $n$-qubit  commuting circuit ${\cal C}= G_T\cdots G_1$ with two-qubit gates $G_i$. Consider the following $(n+1)$-qubit quantum circuit (the ``Hadamard test'') with input  $|0\rangle$ as depicted in Fig. \ref{fig_hadamard}. First $H$ is applied to the first qubit. Then each gate $G_i$ is applied controlled on the first qubit being in the state $|1\rangle$; we denote these 3-qubit gates by $CG_i$. Finally, $H$ is again applied to the first qubit. Measuring the first qubit yields the outcome 0 with probability \be p(0) = \frac{1}{2}(1 + \mbox{ Re}(\langle 0|{\cal C}|0\rangle)).\ee Now for each $i$ define the 3-qubit gate $U_i:= [H\otimes I] CG_i [H\otimes I]$, where $H$ acts on the first qubit, and let ${\cal C}'$ denote the circuit composed of the gates $U_i$. Since the gates $G_i$ commute, also the gates $U_i$ commute. Furthermore, it is straightforward to show that the circuit ${\cal C}'$ acting on $|0\rangle$ and followed by measurement of the first qubit is equivalent to the circuit in Fig. \ref{fig_hadamard},  since the hadamard operations ``in the middle'' cancel out. Thus ${\cal C}'$ also yields the outcome $0$ with probability $p(0)$. It follows that the existence of an efficient classical algorithm to strongly simulate the circuit ${\cal C}'$ yields an efficient classical algorithm to compute the real part of $\langle 0|{\cal C}|0\rangle$ with exponential precision. Replacing the second Hadamard gate in Fig. \ref{fig_hadamard} by $PH$ where $P=$ diag$(1, i)$ and arguing analogously yields an efficient algorithm to estimate  the imaginary part of $\langle 0|{\cal C}|0\rangle$ with exponential precision. Using lemma \ref{thm_postselect} we conclude that  this would imply  that every problem in $\#P$ has a poly-time algorithm.
\finpr

\section{Efficient simulation of commuting Pauli Circuits}\label{sect_pauli}
A circuit composed of unitary operators of the form $e^{i\theta P}$, where the $P$s are (Hermitian) Pauli operators, is called a Pauli circuit.  Recall that every two Pauli operators either commute or anticommute. Pauli circuits are easily seen to be universal for quantum computation. Here we investigate \emph{commuting} Pauli circuits. We allow  $P$ to act on arbitrarily many qubits i.e. we do not restrict to local gates\footnote{Remark that, even for such non-local gates, every gate $e^{i\theta P}$ can be efficiently implemented on a quantum computer i.e.  it can be realized by a polynomial size quantum circuit of elementary gates}.

Given the distinguished status of Pauli operators, commuting Pauli circuits constitute  a simple and natural class of commuting quantum circuits. This class in fact encompasses the model of ``instantaneous quantum computation'' (IQP) introduced in \cite{Shepherd2009instantaneous}. IQP  corresponds to the subclass of commuting Pauli circuits where each $P$ is restricted to be a tensor product of identities and Pauli $X$ matrices, so that every gate $e^{i\theta P}$ is diagonalized by the tensor product operator $H\otimes\cdots \otimes H$. Generalizing IQP to arbitrary commuting Pauli circuits adds the interesting feature that the unitary operator which simultaneously diagonalizes  the gates in the circuit is generally no longer a tensor product of single-qubit operators, but rather a global entangling operation; see example (\ref{K_k}).

Whereas arbitrary Pauli circuits are universal, we will show that commuting Pauli circuits can be efficiently simulated classically in the following sense.

\begin{theorem}{\bf (Weak simulation of Commuting Pauli circuits)}\label{thm_Pauli_simulation}
Every uniform family of commuting Pauli circuits acting on a standard basis input and followed by  measurement of $Z$ acting on one of the qubits can be weakly simulated classically.
\end{theorem}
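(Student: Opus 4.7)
My plan is to exploit the fact that any collection of pairwise commuting Pauli operators can be simultaneously diagonalized by a single Clifford unitary, which reduces the circuit to a diagonal gate sandwiched between two copies of that Clifford. Combined with the well-known sparse structure of stabilizer states in the computational basis, this recasts $\langle Z_1\rangle$ as an expectation over a uniform distribution on an affine subspace of $\{0,1\}^n$, which in turn can be estimated by Monte-Carlo sampling together with the Chernoff-Hoeffding bound of Section \ref{sect_chernoff}.

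Concretely, given ${\cal C}=\prod_j e^{i\theta_j P_j}$ with the $P_j$ pairwise commuting Paulis, I would first invoke the standard result from the stabilizer formalism that in time polynomial in $n$ and the number of gates one can compute a Clifford $V$ such that $VP_jV^\dagger=\mu_j Z^{d_j}$ with signs $\mu_j\in\{\pm 1\}$ and binary vectors $d_j\in\mathbf{F}_2^n$; this is done by Gaussian elimination in the symplectic GF(2) picture of the Pauli group. Then ${\cal C}=V^\dagger D V$ with $D$ the diagonal operator satisfying $D|x\rangle=e^{if(x)}|x\rangle$ and $f(x)=\sum_j \theta_j\mu_j(-1)^{d_j\cdot x}$. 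Setting $|\psi\rangle:=V|0\rangle^n$ (a stabilizer state) and $A:=VZ_1V^\dagger$ (a Pauli, which I will write as $\alpha X^a Z^b$ with $\alpha$ a unit scalar), both efficiently describable, we obtain
\be
\langle Z_1\rangle \;=\; \langle\psi|D^\dagger A D|\psi\rangle.
\ee

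I would then use the standard representation $|\psi\rangle = |L|^{-1/2}\sum_{x\in L} e^{i\omega(x)}|x\rangle$, where $L\subseteq\{0,1\}^n$ is an affine subspace and $\omega:L\to\mathbf{R}$ is a phase function (of ``$\pi/4$-type'' for stabilizer states), both computable in polynomial time from the Clifford tableau for $V$. A short computation using $A|x\rangle=\alpha(-1)^{b\cdot x}|x\oplus a\rangle$ together with the diagonality of $D$ yields
\be
\langle Z_1\rangle \;=\; \alpha\,\mathbb{E}_{x\sim\mathrm{Unif}(L)}\bigl[\,(-1)^{b\cdot x}\,e^{i(f(x)+\omega(x))}\,e^{-i(f(x\oplus a)+\omega(x\oplus a))}\,\chi_L(x\oplus a)\bigr],
\ee
where $\chi_L$ is the indicator of $L$; note that the integrand has modulus at most $1$, and the whole expression vanishes automatically unless $a$ lies in the linear part of $L$ (in which case $\chi_L(x\oplus a)=1$ for every $x\in L$).

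Finally I would show that this expectation can be estimated efficiently: uniform sampling from the affine subspace $L$ is straightforward via Gaussian elimination, and the integrand can be evaluated at each sample in polynomial time from the polynomial-size descriptions of $L$, $\omega$, $f$, and $A$. Applying the Chernoff-Hoeffding bound separately to the real and imaginary parts, $\mathrm{poly}(n,1/\epsilon)$ samples then suffice to obtain an additive-$\epsilon$ estimate with probability exponentially close to $1$, which is precisely what a weak classical simulation requires. The main obstacle I expect is not complexity-theoretic but rather careful bookkeeping of the Clifford/stabilizer data---tracking the signs $\mu_j$ and the phase function $\omega$ through the conjugation by $V$, and verifying the degenerate case in which $a$ is not in the linear part of $L$ (so that the estimator is trivially zero)---all of which is standard once the stabilizer formalism is set up.
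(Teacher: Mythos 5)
Your argument follows the same route as the paper's: use the stabilizer formalism to find a Clifford $V$ that simultaneously maps the commuting $P_j$ to signed $Z$-type operators (this is exactly Lemma~\ref{commuting_pauli}), write ${\cal C}=V^\dagger D V$ with $D$ diagonal, and reduce $\langle Z_1\rangle$ to the single matrix element $\langle\psi|D^\dagger A D|\psi\rangle$ with $|\psi\rangle=V|0\rangle$ a stabilizer state and $A=VZ_1V^\dagger$ a Pauli, which is then estimated by sampling plus the Chernoff--Hoeffding bound. The only difference is presentational: the paper delegates the final sampling step to the black-box CT/monomial result (Theorem~\ref{thm_CT}), whereas you unpack the same estimator explicitly from the affine-subspace/phase-function representation of the stabilizer state --- which is, in effect, a re-derivation of that theorem specialized to the case $|\psi\rangle=|\varphi\rangle$, where the boundedness of the integrand (which you correctly flag) is automatic.
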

It was shown in \cite{Bremner2010Classical} that  IQP circuits followed by single-qubit standard basis measurements can be simulated efficiently weakly\footnote{In fact classical simulations were also achieved in \cite{Bremner2010Classical} for $O(\log n)$ measurements; our results can also be generalized to simulate such measurements for arbitrary commuting Pauli circuits.}. Theorem \ref{thm_Pauli_simulation}  hence generalizes this result to arbitrary commuting Pauli circuits.
Furthermore, in \cite{Bremner2010Classical}  it was shown that efficient weak classical simulation (relative to a certain special type of approximations viz. multiplicative approximations) of  2-local IQP circuits followed by $O(n)$ computational basis measurements are highly unlikely to exist: the existence of such simulations would imply a collapse of the polynomial hierarchy to its third level. Thus a fortiori simulations of $O(n)$ computational basis measurements are unlikely to exist for general commuting Pauli circuits as well.

One can in fact show a stronger version of theorem \ref{thm_Pauli_simulation}: a general Pauli circuit containing a limited degree of non-commutativity can still be simulated classically efficiently.

\begin{theorem}{\bf (Weak simulation of slightly non-commuting Pauli circuits)}\label{thm_Pauli_non-communing}
Consider a uniform family of $n$-qubit  commuting Pauli circuits interspersed with $O(\log n)$  gates of the form $e^{i\theta Q}$ with $Q$ an arbitrary (Hermitian) Pauli operator. Any such circuit family acting on standard basis input and followed by  measurement of $Z$ acting on one of the qubits can be weakly simulated classically.
\end{theorem}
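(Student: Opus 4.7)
The plan is to expand each of the $k=O(\log n)$ non-commuting gates as $e^{i\theta_j Q_j}=\cos\theta_j\,I+i\sin\theta_j\,Q_j$, reducing the simulation to the evaluation of $\mathrm{poly}(n)$ matrix elements of commuting Pauli circuits between standard basis states, and then to extend the sampling argument behind Theorem \ref{thm_Pauli_simulation} to these matrix elements. Concretely, write the circuit as $U=C_k B_k C_{k-1}\cdots B_1 C_0$, with each $C_j$ a product of commuting Pauli rotations drawn from the underlying commuting set and each $B_j=e^{i\theta_j Q_j}$ one of the non-commuting gates. Expanding every $B_j$ yields $U=\sum_{s\in\{0,1\}^k}c_s\,U_s$ with $|c_s|\leq 1$ and $2^k=\mathrm{poly}(n)$ terms. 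Using the identity $Q\,e^{i\theta P}=e^{\pm i\theta P}\,Q$ (sign depending on whether $Q$ and $P$ commute or anticommute), I push every inserted Pauli factor $Q_j^{s_j}$ past the commuting rotations to the right of $U_s$, obtaining $U_s=\tilde C_s R_s$, where $\tilde C_s$ is a commuting Pauli circuit on the same generators (with some angles possibly sign-flipped) and $R_s=\prod_j Q_j^{s_j}$ is a Pauli satisfying $R_s|0\rangle=\phi_s|y_s\rangle$ for a standard basis state $|y_s\rangle$ and easily computed phase $\phi_s$.

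Applying the same expansion to $U^\dagger$ and commuting $Z_1$ through $\tilde C_s$ turns the expectation value into
\[
\langle Z_1\rangle=\sum_{s,t\in\{0,1\}^k}\gamma_{s,t}\,\langle y_t|D_{s,t}|y_s\rangle,
\]
a sum of $2^{2k}=\mathrm{poly}(n)$ terms with $|\gamma_{s,t}|\leq 1$ and each $D_{s,t}$ a commuting Pauli circuit on the original generators. To estimate each matrix element $M_{s,t}:=\langle y_t|D_{s,t}|y_s\rangle$ to precision $\epsilon/\mathrm{poly}(n)$, I diagonalize the commuting generators by an efficiently computable Clifford $V$, write $D_{s,t}=V^\dagger \tilde D_{s,t} V$ with $\tilde D_{s,t}$ diagonal, and set $|\psi_s\rangle:=V|y_s\rangle$, $|\psi_t\rangle:=V|y_t\rangle$. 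These are stabilizer states with amplitudes of uniform modulus on affine subspaces $A_s,A_t\subseteq\{0,1\}^n$, so that
\[
M_{s,t}=\frac{|A_s\cap A_t|}{\sqrt{|A_s|\,|A_t|}}\;\mathbb E_{u\sim\mathrm{Unif}(A_s\cap A_t)}\bigl[e^{i\Theta_{s,t}(u)}\bigr],
\]
with prefactor bounded by $1$ (Cauchy--Schwarz applied to the indicator vectors of $A_s$ and $A_t$) and $\Theta_{s,t}(u)$ a real phase efficiently computable from the stabilizer descriptions of $|\psi_s\rangle,|\psi_t\rangle$ and the diagonal exponents of $\tilde D_{s,t}$. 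Uniform samples from $A_s\cap A_t$ are obtained by linear algebra over $\mathbb F_2$ (the intersection is an affine subspace, empty iff $M_{s,t}=0$), so the Chernoff--Hoeffding bound of Section \ref{sect_chernoff} yields a Monte Carlo estimator of $M_{s,t}$ with polynomial precision from $\mathrm{poly}(n)$ samples. A union bound over the $\mathrm{poly}(n)$ terms then gives $\langle Z_1\rangle$ to accuracy $\epsilon$ with exponentially small failure probability, in total time $\mathrm{poly}(n,1/\epsilon)$.

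The main technical obstacle is the extension of the sampling scheme underlying Theorem \ref{thm_Pauli_simulation}---which concerns the case $s=t$, $y_s=y_t=0^n$---to matrix elements between two possibly distinct standard basis states. This extension is conceptually straightforward once the stabilizer representations of $|\psi_s\rangle,|\psi_t\rangle$ are in hand, but requires careful bookkeeping of the quadratic stabilizer phase functions and of the affine intersection $A_s\cap A_t$ to ensure that the resulting estimator is unbiased and bounded in modulus, so that the polynomial sample complexity of the inner estimator is preserved after summing over the $\mathrm{poly}(n)$ outer terms.
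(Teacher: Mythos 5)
Your decomposition is exactly the paper's: expand each $e^{i\theta Q_j}=\cos\theta_j\,I+i\sin\theta_j\,Q_j$, push the resulting Pauli factors to the right using $e^{i\theta P}Q=Qe^{\pm i\theta P}$, and reduce $\langle Z_1\rangle$ to a sum of $\mathrm{poly}(n)$ matrix elements $\langle y_t|D_{s,t}|y_s\rangle$ of commuting Pauli circuits between standard basis states, each of which is diagonalized by an efficiently computable Clifford (lemma~\ref{commuting_pauli}). The only place you depart from the paper is the last step: the paper observes that the resulting expression is $\langle \psi|M|\varphi\rangle$ with $|\psi\rangle,|\varphi\rangle$ CT (stabilizer) states and $M$ an efficiently computable monomial operator, and invokes theorem~\ref{thm_CT} from \cite{VandenNest2010Simulating} as a black box; you instead re-derive a special case of that theorem by hand, exploiting the affine-support/uniform-modulus structure of stabilizer amplitudes to write the matrix element as a bounded prefactor $\frac{|A_s\cap A_t|}{\sqrt{|A_s||A_t|}}\leq 1$ times an expectation of a unimodular phase over $\mathrm{Unif}(A_s\cap A_t)$, sampled via $\mathbb{F}_2$-linear algebra and estimated by Chernoff--Hoeffding. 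That explicit estimator is correct (your Cauchy--Schwarz bound on the prefactor is the right observation, and both the stabilizer phase function and the diagonal phases of $\tilde D_{s,t}$ are efficiently computable), so the two proofs are mathematically equivalent; yours is more self-contained, at the cost of re-proving the relevant instance of the CT-simulation theorem rather than citing it.
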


The proofs of theorem \ref{thm_Pauli_simulation} and \ref{thm_Pauli_non-communing} are given in section \ref{sect_proof_Pauli_simulation}. In the preceding sections we develop the necessary tools. It is interesting that the simulation techniques used here are completely different from those used our simulations of 2-local commuting circuits (theorem \ref{thm_simulation_two_local}). In particular the latter involved strong simulations whereas commuting Pauli circuits will be simulated using weak simulations combined with stabilizer methods.

\subsection{Pauli and Clifford operators}

A  Pauli operator on $n$ qubits has the form $P= \alpha P_1\otimes \ldots \otimes P_n$, where $\alpha\in\{\pm 1, \pm i\}$ and where each $P_j$ is one of the Pauli matrices $X$, $Y$, $Z$ or the identity. A Pauli operator is said to be of $Z$-type if each $P_j$ is either $Z$ or the identity; $X$-type Pauli operators are defined analogously.  Since $X$, $Y$ and $Z$ are Hermitian, a Pauli operator is Hermitian if and only if $\alpha\in\{1, -1\}$.  Letting $Z_k$ and $X_k$ denote the operators $Z$ and $X$ acting on qubit $k$, respectively, it can be verified that every Pauli operator $P$ can be written as
\be\label{Pauli}
P=i^{t}\prod_k X_k^{a_k}Z_k^{b_k},\quad \mbox{ where } t \in \{0,1,2,3\},\ a_k, b_k \in \{0,1\}.
\ee
Defining the $2n$-dimensional bit string \be r(P) = (a_1,\cdots, a_n, b_1,\cdots, b_n),\ee it is easily verified that $r(PQ)= r(P)+r(Q)$ for all Pauli operators $P$ and $Q$, where addition is modulo 2.

An $n$-qubit operator $U$ is a Clifford operation if $UPU^{\dagger}$ is a Pauli operator for every Pauli operator $P$. The set of all $n$-qubit Clifford operations is a group, called the Clifford group. A Clifford circuit is a quantum circuit composed of $H$, $CNOT$ and $P=$ diag$(1, i)$. It is well known that every Clifford circuit realizes a Clifford operator, and that every Clifford operator can  be realized as a (polynomial-size) Clifford circuit.

\begin{lemma}\label{commuting_pauli}
Let $P_1, \dots, P_m$ be a collection of commuting $n$-qubit Pauli operators. Then there exists a Clifford operation ${\cal C}$ such that ${\cal C}^{\dagger}P_i{\cal C} = Q_i$ for every $i$, where each $Q_i$ is a  $Z$-type Pauli operator. Moreover each $Q_j$ as well as the description of a poly-size Clifford circuit realizing ${\cal C}$ can be determined efficiently.
\end{lemma}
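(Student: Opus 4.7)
The plan is to work entirely in the binary symplectic representation of the Pauli group. Using the parametrisation $P=i^t\prod_k X_k^{a_k}Z_k^{b_k}$ from equation (\ref{Pauli}), associate to each $P_i$ the vector $r(P_i)\in \mathbb{F}_2^{2n}$. Under the symplectic bilinear form $\omega(r,s)=\sum_k(a_k b_k'+b_k a_k')\bmod 2$ on $\mathbb{F}_2^{2n}$, two Pauli operators commute if and only if $\omega(r(P),r(Q))=0$. Hence the subspace $V:=\mathrm{span}\{r(P_1),\dots,r(P_m)\}\subseteq \mathbb{F}_2^{2n}$ is totally isotropic: $\omega|_{V\times V}=0$. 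Being $Z$-type translates to $r(Q)\in \{0\}^n\oplus \mathbb{F}_2^n$, so the lemma amounts to finding a symplectic transformation $T\in \mathrm{Sp}(2n,\mathbb{F}_2)$ with $T(V)\subseteq \{0\}^n\oplus\mathbb{F}_2^n$, together with a Clifford circuit $\mathcal{C}$ implementing $T$.

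First I would extract a basis of $V$ by Gaussian elimination on the $m\times 2n$ matrix whose rows are $r(P_1),\dots,r(P_m)$; since $V$ is isotropic, $\dim V\le n$. Next I would use a symplectic Gram–Schmidt / Witt-style procedure to build $T$ step by step: at each step pick a basis vector $v\in V$ that still has a nonzero $X$-part; use elementary symplectic operations to move its support to a fixed qubit and then zero out its $X$-component (e.g.\ a Hadamard on that qubit sends an isolated $X$ to $Z$). Simultaneously update the remaining basis vectors, and sweep out the corresponding column/row so that future vectors remain orthogonal to the already-processed ones. After at most $n$ iterations the entire basis of $V$ lies in the $Z$-subspace. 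Each elementary symplectic operation used (single-qubit Hadamard, single-qubit phase, CNOT between pairs) is the symplectic image of one of $H,P,\mathrm{CNOT}$, so the resulting $\mathcal{C}$ is a product of $O(n^2)$ elementary Clifford gates and can be written down explicitly. Given $\mathcal{C}$, each $Q_i=\mathcal{C}^\dagger P_i\mathcal{C}$ is obtained by applying $T$ to $r(P_i)$ and tracking the phase $i^{t_i}$, both of which are efficient.

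The only subtlety will be the phase bookkeeping. Conjugation by an elementary Clifford permutes Pauli operators but can introduce an overall sign or factor of $i$; one has to verify that the updated object remains a Hermitian Pauli (or at worst a Pauli with phase $i^t$) and that this phase can be tracked efficiently. This is routine: one maintains, along with each symplectic vector $r(P_i)$, the scalar $t_i\in\{0,1,2,3\}$, and each elementary Clifford gate updates $t_i$ by an easily computed rule from the current $a_k,b_k$ values. The lemma's statement allows any phase $i^t$ inside the $Z$-type form of $Q_i$, so no further correction is needed.

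In summary, correctness reduces to the standard fact that $\mathrm{Sp}(2n,\mathbb{F}_2)$ acts transitively on isotropic subspaces of a given dimension, and efficiency follows from standard Gaussian-elimination-style arguments on $\mathbb{F}_2^{2n}$. The main obstacle is nothing deep, just the careful tracking of phases; the symplectic part of the argument is a direct linear-algebraic construction.
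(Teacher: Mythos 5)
Your argument is correct, and it takes a genuinely different route from the paper's. You work directly at the level of the symplectic representation: view the $r(P_i)$ as an isotropic subspace $V \subseteq \mathbb{F}_2^{2n}$, use a constructive Witt/symplectic-Gram--Schmidt reduction to produce $T \in \mathrm{Sp}(2n,\mathbb{F}_2)$ with $T(V) \subseteq \{0\}^n \oplus \mathbb{F}_2^n$, and realize $T$ by a poly-size word in the elementary Cliffords $H, S, \mathrm{CNOT}$ while tracking the Pauli phase $i^t$ by hand. The paper instead stays at the level of the stabilizer formalism: it extracts an independent subset $\{P_1,\dots,P_l\}$, checks that no product equals $-I$, extends it to a full generating set of $n$ independent commuting Paulis, lets $|\psi\rangle$ be the associated stabilizer state, and invokes the cited result that every stabilizer state equals $\gamma\,{\cal C}|0\rangle^n$ for an efficiently computable poly-size Clifford ${\cal C}$; then $Q_j = {\cal C}^\dagger P_j {\cal C}$ fixes $|0\rangle$ and hence must be $Z$-type, and the remaining $Q_k$ are $Z$-type because they are phase-times-products of the $Q_j$. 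What your approach buys is self-containment and sign-agnosticism: you never need the ``no product equals $-I$'' check, and you bypass the stabilizer-state existence theorem by constructing the circuit directly. What the paper's approach buys is brevity: all of the symplectic linear algebra and phase bookkeeping that you would have to write out carefully is outsourced to a single citation, and as a bonus the construction forces the leading $Q_j$ to have sign $+1$ (since $Q_j|0\rangle = |0\rangle$), although the lemma does not require this. Two small cautions on your write-up: the elementary reduction must proceed in a staircase fashion (bring $v_1$ to $Z_1$, then restrict all further gates to qubits $2,\dots,n$, and so on), since a naive per-vector local-Clifford cleanup can undo earlier work; and you should state explicitly that commutation is preserved under Clifford conjugation so that the recursion stays inside the isotropic regime. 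Neither point is a gap in the idea, just something to spell out.
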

\begin{proof} It suffices to prove the result for Hermitian Pauli operators since every Pauli operator can be made Hermitian by providing it with a suitable overall phase. Thus henceforth we assume that the $P_i$ are Hermitian. We can write all $m$ vectors $r(P_i)$ in a $m\times 2n$ matrix and pick out a maximal set of independent row vectors over $\mathbb{Z}_2$ efficiently by Gaussian elimination. W.l.o.g. we assume these are the first $l$ vectors. The corresponding Pauli operators $\{P_1, \dots, P_l\}=:{\cal S}$ form an independent set i.e. no operator in ${\cal S}$ can be written as a product of the other elements of ${\cal S}$. In addition, no product of operators in ${\cal S}$ yields $-I$. Indeed suppose there exist bits $x_j$, not all zero,  such that $P_1^{x_1}\dots P_l^{x_l} = -I$. This would imply that $\sum x_j r(P_j)=0$, contradicting with the linear independence of the $r(P_j)$. Since the operators in ${\cal S}$ are Hermitian, independent and commuting and since no product of some of these operators yields $-I$, there exists a stabilizer code ${\cal V}$ of dimension $2^{n-l}$ stabilized by ${\cal S}$ \cite{Nielsen2000quantum}. This implies in particular that $l\leq n$. Using standard stabilizer techniques one can efficiently compute additional Hermitian Pauli operators ${\cal S}'=\{R_{l+1}, \dots, R_n\}$ such that all operators in the set  ${\cal T}= {\cal S}\cup {\cal S}'$ mutually commute, are independent and no product of these operators yields $-I$ \cite{Nielsen2000quantum}. These $n$ operators are the stabilizers of a 1-dimensional stabilizer code i.e. a stabilizer state $|\psi\rangle$. In other words $|\psi\rangle$ satisfies $P_i|\psi\rangle= |\psi\rangle = R_j|\psi\rangle$ for every $i=1, \dots, l$ and $j=l+1, \dots, n$, and moreover it is the unique state doing so. It is well known that there exists a poly-size $n$-qubit Clifford circuit ${\cal C}$ such that $|\psi\rangle = \gamma {\cal C}|0\rangle^n$ for some global phase $\gamma$; moreover a description of ${\cal C}$ can be computed efficiently \cite{Dehaene2003the}.
Now define $Q_i = {\cal C}^{\dagger}P_i {\cal C}$ for every $i=1, \dots, m$. Each $Q_i$ is an efficiently computable Pauli operator since ${\cal C}$ is a poly-size Clifford circuit.  Since ${\cal C}|0\rangle = |\psi\rangle$ and $P_j|\psi\rangle = |\psi\rangle$ for every $P_j\in {\cal S}$ one has $Q_j|0\rangle = |0\rangle$. This last property together with the fact that each $Q_j$ is a Pauli operator  implies that $Q_j$ must be of $Z$-type. Finally, since each $P_k$ with $k\geq l+1$ can be written, up to a global phase,  as a product of operators within ${\cal S}$ and since products of $Z$-type Pauli operators are again of $Z$-type,  it follows that also $Q_k$ is of $Z$-type.
\end{proof}

\subsection{CT states}\label{sect_CT}

Here we recall a result (theorem \ref{thm_CT} below) stating that a general class of quantum processes can be simulated weakly. First we need some definitions. Consider a family of $n$-qubit states $|\psi_n\rangle \equiv |\psi\rangle$  specified in terms of some classical description, say a quantum circuit preparing $|\psi\rangle$ from the state $|0\rangle$. Following \cite{VandenNest2010Simulating}, $|\psi\rangle$ is said to be \emph{computationally tractable (CT)} (relative to this description) if
\begin{itemize}
\item[(a)] it is possible to sample in poly$(n)$ time with classical means from the probability distribution Prob$(x)=|\qinner{x}{\psi}|^2$ on the set of $n$-bit strings $x$, and
\item[(b)] for any bit string $x$, the coefficient $\qinner{x}{\psi}$ can be computed in poly$(n)$ time on a classical computer with exponential precision.
\end{itemize}
Second, an $n$-qubit unitary operator $U$ is said to be monomial if there exists a permutation $\pi$ on the set of $n$-bit strings and a family of complex phases $\lambda_x$, such that \be U|x\rangle = \lambda_x |\pi(x)\rangle\quad \mbox{ for every }|x\rangle.\ee In other words $U$ maps each standard basis state to another one, up to a global phase. Equivalently, one has $U=PD$ where $D=\sum\lambda_x|x\rangle\langle x|$ is a diagonal matrix and $P = \sum |\pi(x)\rangle\langle x|$ is a permutation matrix. The operation $U$ is said to be efficiently computable if the functions \be x\to \lambda_x, \quad x\to\pi(x)\quad \mbox{ and } \quad x\to \pi^{-1}(x)\ee can be computed efficiently.

In our simulation of commuting Pauli circuits we will use the following classical simulation result proved in \cite{VandenNest2010Simulating}.
\begin{theorem}{\bf (CT states)}\label{thm_CT}
Let $\qs{\psi}$ and $|\varphi\rangle$  be  $n$-qubit CT states and let $U$ be an $n$-qubit efficiently computable monomial operation. Then there exists a polynomial time  classical algorithm to approximate $\qsl{\psi}U\qs{\varphi}$ with polynomial accuracy (and exponentially small probability of failure).
\end{theorem}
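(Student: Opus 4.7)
The plan is to construct a classical Monte-Carlo estimator for $\langle\psi|U|\varphi\rangle$: a complex random variable $F$ on $\{0,1\}^n$ such that (i) $\mathbb{E}F=\langle\psi|U|\varphi\rangle$, (ii) $|F(x)|\leq 1$ pointwise, and (iii) both the underlying distribution and $F$ itself can be sampled and evaluated in polynomial time using only the two CT properties of $|\psi\rangle$ and $|\varphi\rangle$ together with the efficient computability of $U$. Once (i)--(iii) are in place, the Chernoff--Hoeffding bound of section \ref{sect_chernoff} immediately converts poly$(n,1/\epsilon)$ independent samples of $F$ into a polynomial-accuracy estimate with exponentially small failure probability.

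To obtain the estimator, I would use the factorization $U=PD$ with $D|x\rangle=\lambda_x|x\rangle$ and $P|x\rangle=|\pi(x)\rangle$, and insert a resolution of the identity in the standard basis to get
\be
\langle\psi|U|\varphi\rangle=\sum_{x\in\{0,1\}^n} a_x\, b_x,\qquad a_x:=\lambda_x\langle\psi|\pi(x)\rangle,\quad b_x:=\langle x|\varphi\rangle.
\ee
Since $\pi$ is a bijection and $|\lambda_x|=1$, one has $\sum_x|a_x|^2=\sum_x|b_x|^2=1$, so the \emph{symmetric mixture}
\be
r(x):=\tfrac{1}{2}\bigl(|a_x|^2+|b_x|^2\bigr)
\ee
is a probability distribution on $\{0,1\}^n$. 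Crucially it is efficiently samplable: toss a fair coin; on heads draw $y$ from $|\langle y|\psi\rangle|^2$ using CT property (a) of $|\psi\rangle$ and return $x=\pi^{-1}(y)$; on tails draw $x$ directly from $|\langle x|\varphi\rangle|^2$ using CT property (a) of $|\varphi\rangle$. I would then define
\be
F(x):=\frac{a_x b_x}{r(x)}\quad\text{(with }F(x):=0\text{ if }r(x)=0\text{)}.
\ee
A direct computation gives $\mathbb{E}_{x\sim r}F(x)=\sum_x a_x b_x=\langle\psi|U|\varphi\rangle$, and the AM--GM inequality yields
\be
|F(x)|=\frac{2|a_x b_x|}{|a_x|^2+|b_x|^2}\leq 1.
\ee

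Each sample $F(x)$ can be evaluated in polynomial time with exponential precision: $\lambda_x$, $\pi(x)$ and $\pi^{-1}(x)$ are efficiently computable by hypothesis, while CT property (b) of $|\psi\rangle$ and $|\varphi\rangle$ furnishes the amplitudes $\langle\psi|\pi(x)\rangle$ and $\langle x|\varphi\rangle$. Averaging poly$(n,1/\epsilon)$ independent samples of $F$ and applying the Chernoff--Hoeffding bound separately to its real and imaginary parts then gives the claimed polynomial-accuracy approximation with exponentially small probability of failure. The main obstacle lies in property (ii): the naive importance sampler obtained by drawing $x$ from only one of $|a_x|^2$ or $|b_x|^2$ produces the correct mean but an \emph{unbounded} estimator, so Chernoff--Hoeffding gives no useful concentration. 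The decisive step of the proof is therefore the introduction of the symmetric mixture $r(x)$ and the AM--GM bound that forces $F$ into the closed unit disc; the remainder is a routine combination of the two CT properties with the Chernoff--Hoeffding argument.
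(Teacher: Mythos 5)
Your proof is correct. Note that the paper itself does not prove Theorem~\ref{thm_CT} but cites it from~\cite{VandenNest2010Simulating}; your argument reproduces that reference's Monte-Carlo approach faithfully, including the essential step of sampling from the symmetric mixture $r(x)=\tfrac{1}{2}(|a_x|^2+|b_x|^2)$ so that the AM--GM bound $|F(x)|\le 1$ makes the Chernoff--Hoeffding bound applicable. The handling of the $r(x)=0$ case, the use of $\pi^{-1}$ to sample the $|\psi\rangle$-branch, and the appeal to CT property (b) with exponential precision for computing $a_x$ and $b_x$ are all in order.
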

For our purposes, it will be relevant that every stabilizer state is CT. More precisely,  for every polynomial-size Clifford circuit ${\cal C}$ and standard basis state $|x\rangle$ (where $x$ is an $n$-bit string), the state $|\psi\rangle = {\cal C}|x\rangle$ is CT relative to  the description of ${\cal C}$ and the input $x$. Property (a) is the content of the Gottesman-Knill theorem \cite{Gottesman1997Stabilizer}. Property (b) was shown in \cite{Dehaene2003the}; in fact for every stabilizer state $|\psi\rangle$ the standard basis coefficients $\langle y|\psi\rangle$ can be computed exactly. We refer to \cite{VandenNest2010Simulating} for a more extensive discussion of CT states.

As for monomial operators, it is easily shown using (\ref{Pauli}) that every Pauli operator is unitary, monomial and efficiently computable. Second,  every unitary operator of the form $\exp[i\theta Q]$, where $Q$ is any (Hermitian) $Z$-type Pauli operator, is diagonal and hence monomial. Furthermore it is straightforward to show that any such operator is efficiently computable. More generally, it is useful to note (and easy to show):
\begin{lemma}\label{thm_prod_monomial}
If $U_1, \dots, U_k$ are efficiently computable monomial unitary $n$-qubit operators and $k=$ poly$(n)$, then also $\prod_{i=1}^k U_i$ is efficiently computable monomial.
\end{lemma}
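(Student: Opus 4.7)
\textbf{Proof plan for Lemma \ref{thm_prod_monomial}.} The plan is to show by direct calculation that $U := \prod_{i=1}^k U_i$ is again monomial, to read off explicit formulas for its associated permutation and phase function, and then to argue that these formulas can be evaluated in polynomial time given that each $U_i$ is efficiently computable.

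First I would set up notation. Write $U_i|x\rangle = \lambda_x^{(i)} |\pi_i(x)\rangle$ for each $i$, where by hypothesis the maps $x\mapsto \lambda_x^{(i)}$, $x\mapsto \pi_i(x)$ and $x\mapsto \pi_i^{-1}(x)$ are all efficiently computable. A straightforward induction on $k$ (peeling off $U_k$ first, then $U_{k-1}$, and so on) yields
\begin{equation}
U|x\rangle \;=\; \Big(\prod_{i=1}^{k} \lambda^{(i)}_{y_{i+1}}\Big)\,|y_1\rangle,
\end{equation}
where $y_{k+1}:=x$ and $y_i := \pi_i(y_{i+1})$ for $i=k,k-1,\dots,1$. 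In particular $U$ has the form $U|x\rangle = \lambda_x |\pi(x)\rangle$ with $\pi = \pi_1\circ\pi_2\circ\cdots\circ\pi_k$ and $\lambda_x = \prod_{i=1}^{k} \lambda^{(i)}_{y_{i+1}}$, so $U$ is monomial.

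Next I would verify efficient computability. To evaluate $\pi(x)$ for a given $x$, iteratively compute $y_k = \pi_k(x)$, $y_{k-1}=\pi_{k-1}(y_k)$, \dots, $y_1=\pi_1(y_2)$; this requires $k$ calls to the efficient subroutines computing the $\pi_i$, which runs in $\mathrm{poly}(n)$ time since $k=\mathrm{poly}(n)$. The inverse permutation satisfies $\pi^{-1}=\pi_k^{-1}\circ\cdots\circ\pi_1^{-1}$ and is handled in the same way using the subroutines for the $\pi_i^{-1}$. For $\lambda_x$, the same sweep that computes the $y_i$'s can accumulate the product $\prod_{i=1}^{k}\lambda^{(i)}_{y_{i+1}}$ on the fly, again in $\mathrm{poly}(n)$ time.

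There is essentially no obstacle here: the lemma is really a compositionality statement, and the only thing to check is that the factor $k=\mathrm{poly}(n)$ in the length of the product does not spoil polynomial-time computability, which it does not. A minor point worth mentioning explicitly in the write-up is bookkeeping of numerical precision for the phases (since the $\lambda_x^{(i)}$ are in general irrational complex numbers), but this is handled exactly as in the definition of ``efficiently computable'' used throughout the paper, i.e.\ each $\lambda^{(i)}_{y_{i+1}}$ is computed to sufficient precision and the $k$ factors are multiplied; the overall precision loss is polynomial in $n$ and $k$, which is harmless.
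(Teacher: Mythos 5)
Your proof is correct. The paper does not supply a proof of this lemma (it is dismissed as ``easy to show''), and your direct composition argument — tracking the intermediate strings $y_i$, composing the permutations, accumulating the phases, and noting that the $k=\mathrm{poly}(n)$ subroutine calls run in polynomial time — is exactly the argument the paper leaves implicit.
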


\subsection{Proof of theorem \ref{thm_Pauli_simulation}} \label{sect_proof_Pauli_simulation}

For clarity we prove theorem \ref{thm_Pauli_simulation} separately even though it is superseded by theorem \ref{thm_Pauli_non-communing}. Denote the input by $|x\rangle$ where $x$ is an $n$-bit string. Denote the Pauli circuit by ${\cal U}$ and let $e^{i\theta_j P_j}$ denote its gates $(1\leq j\leq m)$. Let $\langle Z_i\rangle$ denote the expectation value of $Z$. First we invoke lemma \ref{commuting_pauli}, yielding a Clifford circuit ${\cal C}$ satisfying ${\cal C}^{\dagger} P_j{\cal C} = Q_j$ for some efficiently computable Hermitian $Z$-type operators $Q_j$. It follows that \be e^{i\theta_j P_j} = {\cal C} e^{i\theta_j Q_j}{\cal C}^{\dagger}\ee and therefore ${\cal U} = {\cal C} {\cal D}{\cal C} ^{\dagger}$  where ${\cal D}$ is given by the product of the $m$ diagonal operators $e^{i\theta_j Q_j}$. Denote $P={\cal C}^{\dagger} Z_i {\cal C}$ which is an efficiently computable Pauli operator. Furthermore denote $|\psi\rangle:= {\cal C}^{\dagger}|x\rangle$. Then \be \langle Z_i\rangle = \langle x|{\cal U}^{\dagger} Z_i {\cal U}|x\rangle = \langle \psi|{\cal D}^{\dagger} P {\cal D}|\psi\rangle.\ee Since ${\cal C}$ is a Clifford circuit, $|\psi\rangle$ is a CT state. Finally $M:={\cal D}^{\dagger} P {\cal D}$ is monomial and efficiently computable: indeed the Pauli operator $P$ as well as each $e^{\theta_j Q_j}$ are efficiently computable monomial, as discussed in section \ref{sect_CT}. Applying lemma \ref{thm_prod_monomial} then shows that $M$ is efficiently computable monomial as well. Theorem \ref{thm_CT} can now be applied.

\subsection{Proof of theorem \ref{thm_Pauli_non-communing}}

We assume w.l.o.g. that $Z$ is measured on the first qubit. Let ${\cal C}'$ be obtained by interspersing the commuting Pauli circuit ${\cal C}= \prod e^{i\theta P_j}$ with  $k$ additional gates $e^{i\theta Q_{j}}$ at arbitrary places in the circuit.  Write \be e^{i\theta Q_j} = [\cos \theta] I + [i\sin \theta] Q_j\ee  for ever such additional gate. Doing so, the circuit ${\cal C}'$ is written as a linear combination of $2^k$ circuits (with coefficients of the form $(\cos \theta)^l(i\sin\theta)^{k-l}$), each of which being obtained by replacing $e^{i\theta Q_j}$ by either $I$ or $Q_j$. Thus every circuit in the linear combination is obtained by interspersing ${\cal C}$ with $k$ Pauli operators. Using that $e^{i\theta P} Q = Q e^{\pm i\theta P}$ for every two Pauli operators $P$ and $Q$,  the $Q_j$ can all be commuted to the right. As a result, we find that ${\cal C}'$ is written in the form \be {\cal C}' = \sum_{\alpha=1}^{2^k} a_{\alpha} {\cal C}_{\alpha}\Sigma_{\alpha} ,\ee where each coefficient $a_{\alpha}$ is efficiently computable, where each $\Sigma_{\alpha}$ is a Pauli operator and where each ${\cal C}_{\alpha}$ is a commuting Pauli circuit obtained by flipping a subset of the signs $P_j\to -P_j$ in the commuting circuit ${\cal C}$. Furthermore there are only poly$(n)$ terms in the sum since $k=O(\log n)$ by assumption. To arrive at an efficient weak simulation of ${\cal C}'$ followed by measurement of $Z_1$, it suffices to show that each of the matrix elements \be\label{Pauli_proof_matrix_element} \langle x|\Sigma_{\alpha} {\cal C}_{\alpha}^{\dagger} Z_1 {\cal C}_{\beta}\Sigma_{\beta}|x\rangle \ee can be estimated efficiently with polynomial accuracy. First we can commute $Z_1$ to the right,  transforming ${\cal C}_{\beta}$ into  a commuting Pauli circuit $\overline{{\cal C}}_{\beta}$ obtained by changing some of the signs $P_j\to \pm P_j$ as before. Note that the combined circuit ${\cal C}^{\dagger}_{\alpha} \overline{{\cal C}}_{\beta}$ is a commuting Pauli circuit since all gates have the form $e^{\pm i\theta_j P_j}$. Furthermore $\Sigma_{\alpha}|0\rangle$ and $Z_1\Sigma_{\beta}|0\rangle$ are, up to global phases, simple standard basis states,  say $|y\rangle$ and $|z\rangle$ resp., which can be computed efficiently. Analogous to the proof of theorem \ref{thm_Pauli_simulation} we write ${{\overline{\cal C}}^{\dagger}_{\alpha}}{\cal C}_{\beta} = {\cal U}{\cal D}{\cal U}^{\dagger}$ where ${\cal U}$ is a polynomial size Clifford circuit and ${\cal D}$ is a product of diagonal gates. Putting everything together we find that (\ref{Pauli_proof_matrix_element}) is, up to an efficiently computable overall phase, of the form $\langle y|{\cal U}{\cal D}{\cal U}^{\dagger}|z\rangle$ for some  standard basis states $|y\rangle$ and $|z\rangle$. Since ${\cal U}^{\dagger}|y\rangle$ and ${\cal U}^{\dagger}|z\rangle$ are CT states (see section \ref{sect_CT}) and since ${\cal D}$ is efficiently computable monomial, we can apply theorem \ref{thm_CT} yielding an efficient classical algorithm to estimate (\ref{Pauli_proof_matrix_element}). This proves the result.

\section{Mapping non-commuting circuits to commuting circuits}\label{sect_power_of_commuting}

Here we show that commuting circuits can be used to efficiently reproduce the output of certain non-commutative processes. These results will provide evidence that commuting circuits can be used to solve tasks that appear nontrivial for classical computers.

\subsection{Two-layer circuits}

For every constant $k$ we let $\Gamma^k$ denote a computational model involving a universal classical computer supplemented with a restricted quantum computer  operating with uniformly generated families of $k$-local commuting circuits acting on an arbitrary product input state and followed by $Z$ measurement of the first qubit. By construction,  $\Gamma^k$ has the power to efficiently solve every problem in the complexity class P, for every $k$. Our goal is to investigate whether $\Gamma^k$-computations have the potential to outperform classical computers.

\begin{theorem}{\bf(Mapping $k$-local non-commuting to $(k+1)$-local commuting circuits)}\label{thm_comm_non_comm}
Let ${\cal C}_1$ and ${\cal C}_2$ be uniform families of $k$-local $n$-qubit commuting circuits, where the gates in ${\cal C}_1$ need not commute with those in ${\cal C}_2$. Then there exists a polynomial time $\Gamma^{k+1}$-algorithm which approximates  $\langle 0|{\cal C}_1{\cal C}_2|0\rangle$ with polynomial accuracy  (with success probability exponentially close to 1).
\end{theorem}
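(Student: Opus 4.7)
The plan is to construct, for any commuting $k$-local circuits $\mathcal{C}_1$ and $\mathcal{C}_2$, a single $(k{+}1)$-local commuting circuit $\mathcal{W}$ on $n{+}1$ qubits (one ancilla qubit $a$ plus the $n$-qubit system) such that $\langle Z_a\rangle$ in the state $\mathcal{W}|0\rangle_a|0\rangle^n$ equals $\mathrm{Re}\,\langle 0|\mathcal{C}_1\mathcal{C}_2|0\rangle$. Running $\mathcal{W}$ polynomially many times and averaging the measurement outcomes yields a polynomial-accuracy estimator of the real part by Chernoff-Hoeffding (Section \ref{sect_chernoff}), and appending one extra single-qubit phase gate on the ancilla gives the imaginary part by the same argument. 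Classical post-processing of the two estimates is the claimed $\Gamma^{k+1}$ algorithm.

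The idea is a modified Hadamard test. The naive Hadamard test on $\mathcal{C}_1\mathcal{C}_2$ would use controlled-$\mathcal{C}_1$ followed by controlled-$\mathcal{C}_2$ with both controls on $a{=}1$, and the resulting gates $C_aG_{1,i}$ and $C_aG_{2,j}$ fail to commute precisely when $G_{1,i}$ and $G_{2,j}$ do not commute on the system. To bypass this I use the polarisation identity $\mathrm{Re}\,\langle 0|\mathcal{C}_1\mathcal{C}_2|0\rangle = \tfrac12\|(\mathcal{C}_1^\dagger + \mathcal{C}_2)|0\rangle\|^2 - 1$, realised by the block-diagonal unitary $V := |0\rangle\langle 0|_a \otimes \mathcal{C}_1^\dagger + |1\rangle\langle 1|_a \otimes \mathcal{C}_2 = \prod_i C_{\bar a}G_{1,i}^\dagger \cdot \prod_j C_a G_{2,j}$: that is, $\mathcal{C}_1^\dagger$ is applied in the $a{=}0$ branch and $\mathcal{C}_2$ in the $a{=}1$ branch. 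Taking $\mathcal{W} := H_a V H_a$ and distributing the outer Hadamards into each factor by the telescoping trick from the proof of Theorem \ref{thm_hardness_3_local}, $\mathcal{W}$ becomes a product of the $(k{+}1)$-local gates $H_a C_{\bar a}G_{1,i}^\dagger H_a$ and $H_a C_a G_{2,j} H_a$, and the standard Hadamard-test calculation gives $p(Z_a{=}0) = \tfrac14\|(\mathcal{C}_1^\dagger + \mathcal{C}_2)|0\rangle\|^2 = \tfrac12\bigl(1 + \mathrm{Re}\,\langle 0|\mathcal{C}_1\mathcal{C}_2|0\rangle\bigr)$.

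The key technical step is the commutation check for $\mathcal{W}$. Within each of the two families the commutation is standard (common control and mutually commuting $G$'s, preserved under taking adjoints). The cross-family commutation is where the construction actually buys something: the operators $C_{\bar a}G_{1,i}^\dagger$ and $C_a G_{2,j}$ have opposite controls, so they act non-trivially on orthogonal ancilla subspaces, and a one-line computation yields $(C_{\bar a}G_{1,i}^\dagger)(C_a G_{2,j}) = |0\rangle\langle 0|_a\otimes G_{1,i}^\dagger + |1\rangle\langle 1|_a \otimes G_{2,j} = (C_a G_{2,j})(C_{\bar a}G_{1,i}^\dagger)$, regardless of whether $[G_{1,i},G_{2,j}]$ vanishes on the system. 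Commutativity survives the common conjugation by $H_a$ and each resulting gate remains $(k{+}1)$-local. This \emph{opposite-control} trick is the crux of the argument. The main obstacle I expect is exactly this cross-family commutation check; once it is granted, the remainder -- the $p(Z_a{=}0)$ calculation, the imaginary-part extension (obtained by appending $H_a S_a^\dagger H_a$ with $S_a := \mathrm{diag}(1,i)$, which commutes with $V$ because $S_a^\dagger$ is diagonal in the ancilla $Z$-basis), and the Chernoff-Hoeffding estimator -- is routine.
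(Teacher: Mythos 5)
Your proof is correct and constructs exactly the same block-diagonal unitary $V = |0\rangle\langle 0|_a \otimes \mathcal{C}_1^\dagger + |1\rangle\langle 1|_a \otimes \mathcal{C}_2$ that underlies the paper's ``alternate Hadamard test'' (Lemma~\ref{thm_improved_hadamard}); the difference lies only in how $V$ is factored into commuting $(k{+}1)$-local gates. The paper first brings $\mathcal{C}_1, \mathcal{C}_2$ into standard form, relabels so that the $i$-th gate of $\mathcal{C}_2$ and the $(m{+}1{-}i)$-th gate of $\mathcal{C}_1$ share the same support $S$, and then fuses each such pair into a single $(k{+}1)$-local gate $W_i = |0\rangle\langle 0|\otimes G_{m+1-i}^\dagger + |1\rangle\langle 1|\otimes G'_i$; the reindexing is essential there because a $W_i$ combining gates with disjoint supports would be $(2k{+}1)$-local, as the paper itself remarks after Lemma~\ref{thm_improved_hadamard}. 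You instead factor $V$ into $2m$ singly-controlled gates $C_{\bar a}G_{1,i}^\dagger$ and $C_a G_{2,j}$, each automatically $(k{+}1)$-local, and establish commutativity directly: the intra-family checks follow from commutativity of the $G$'s (and their adjoints), and your opposite-control observation $(C_{\bar a}A)(C_aB)=|0\rangle\langle 0|\otimes A+|1\rangle\langle 1|\otimes B=(C_aB)(C_{\bar a}A)$ handles the cross-family pairs regardless of $[G_{1,i},G_{2,j}]$. This sidesteps the standard-form pairing entirely, at the harmless cost of doubling the gate count, and arguably exposes the mechanism more transparently: the ancilla's control polarity, not the system-level algebra, is what neutralizes the non-commutativity of $\mathcal{C}_1$ with $\mathcal{C}_2$. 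The remaining steps --- telescoping the outer Hadamards into each gate, the probability computation $p(0)=\tfrac12(1+\mathrm{Re}\langle 0|\mathcal{C}_1\mathcal{C}_2|0\rangle)$, the diagonal $S_a^\dagger$ insertion for the imaginary part, and the Chernoff--Hoeffding estimator --- all check out and match the paper.
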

The above result shows that the non-commutativity in the two-layer circuit ${\cal C}_1{\cal C}_2$ can be ``removed'' by allowing gates to act on $k+1$ qubits. The proof is an immediate consequence of the following alternate version of the Hadamard test (which regards arbitrary, i.e. not necessarily commuting, circuits).

\begin{lemma}{\bf (Alternate Hadamard test)}\label{thm_improved_hadamard}
Let ${\cal U} = U_{2m}\cdots U_1$ be an $n$-qubit quantum circuit of even size $2m$. Add one extra qubit line (henceforth called qubit 1) and for every $i=1 \cdots m$ define the gate \be\label{W_gates} W_i = |0\rangle\langle 0| \otimes U^{\dagger}_{2m+1-i} + |1\rangle\langle 1|\otimes U_{i},\ee which acts on qubit 1 and the qubits on which $U_i$ and $U_{i+k}$ acted in the initial circuit ${\cal U}$.  Consider the following circuit ${\cal U}'$ acting  on the $(n+1)$-qubit  input $|0\rangle$: first, apply $H$ to qubit 1; second, apply the gates $W_1, \dots, W_m$; third, apply $H$ to qubit 1; finally measure $Z$ on qubit 1. Then the probability of outputting 0 is  \be p(0) = \frac{1}{2}(1 + \mbox{Re}\langle 0|{\cal U}|0\rangle).\ee Analogously, replacing $H$ in the third step by $HP$ with $P=$ diag$(1, i)$ yields the imaginary part of $\langle 0|{\cal U}|0\rangle$.
\end{lemma}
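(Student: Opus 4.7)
The plan is to carry out a direct state-tracking calculation through the four stages of the circuit $\mathcal{U}'$ and read off $p(0)$ at the end, exactly as in the standard Hadamard test but with slightly cleverer bookkeeping of the gate indices.

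First I would apply the initial $H$ on qubit $1$ to obtain $\tfrac{1}{\sqrt{2}}(|0\rangle + |1\rangle)\otimes|0\rangle^n$. The key observation is that each $W_i$ in (\ref{W_gates}) is block-diagonal with respect to the computational basis on qubit $1$, so the sequence $W_m\cdots W_1$ preserves the two-branch superposition and simply applies on the lower register the "ancilla-$0$ word" $U^\dagger_{m+1}U^\dagger_{m+2}\cdots U^\dagger_{2m}=(U_{2m}\cdots U_{m+1})^\dagger$ in the $|0\rangle$-branch and the "ancilla-$1$ word" $U_m\cdots U_1$ in the $|1\rangle$-branch. Setting
$$|\phi\rangle := (U_{2m}\cdots U_{m+1})^\dagger|0\rangle^n,\qquad |\psi\rangle := U_m\cdots U_1|0\rangle^n,$$
the state right before the final Hadamard is $\tfrac{1}{\sqrt{2}}(|0\rangle|\phi\rangle + |1\rangle|\psi\rangle)$, and the crucial identity
$$\langle\phi|\psi\rangle = \langle 0|^n U_{2m}\cdots U_{m+1} U_m\cdots U_1|0\rangle^n = \langle 0|\mathcal{U}|0\rangle$$
follows immediately from the dagger cancelling out, which is exactly why the definition of $W_i$ pairs gate $i$ with gate $2m+1-i$ rather than something else.

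Second, I apply the final Hadamard to qubit $1$, obtaining $\tfrac{1}{2}|0\rangle(|\phi\rangle+|\psi\rangle)+\tfrac{1}{2}|1\rangle(|\phi\rangle-|\psi\rangle)$, so that
$$p(0)=\tfrac{1}{4}\bigl\||\phi\rangle+|\psi\rangle\bigr\|^2 = \tfrac{1}{4}\bigl(\langle\phi|\phi\rangle+\langle\psi|\psi\rangle + 2\,\mathrm{Re}\langle\phi|\psi\rangle\bigr) = \tfrac{1}{2}\bigl(1+\mathrm{Re}\langle 0|\mathcal{U}|0\rangle\bigr),$$
using that $|\phi\rangle$ and $|\psi\rangle$ are unit vectors. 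For the imaginary-part statement I repeat the calculation after replacing the last $H$ by $HP$: the extra diagonal factor $P=\mathrm{diag}(1,i)$ turns the pre-measurement state into $\tfrac{1}{2}|0\rangle(|\phi\rangle+i|\psi\rangle)+\tfrac{1}{2}|1\rangle(|\phi\rangle-i|\psi\rangle)$, and expanding $\||\phi\rangle+i|\psi\rangle\|^2 = 2 + i\bigl(\langle\phi|\psi\rangle-\overline{\langle\phi|\psi\rangle}\bigr) = 2 - 2\,\mathrm{Im}\langle\phi|\psi\rangle$ yields $p(0)=\tfrac{1}{2}(1-\mathrm{Im}\langle 0|\mathcal{U}|0\rangle)$, which determines $\mathrm{Im}\langle 0|\mathcal{U}|0\rangle$ up to the known overall sign.

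There is no genuine obstacle here; the only subtle point is the index-matching that ensures the inner product reconstructs $\langle 0|\mathcal{U}|0\rangle$. Specifically, as $i$ runs from $1$ to $m$ in the $|0\rangle$-branch, the indices $2m+1-i$ run through $2m,2m-1,\ldots,m+1$, so the order in which the daggered gates are applied is precisely the reverse of the order in $(U_{2m}\cdots U_{m+1})$, making the dagger cancel cleanly against $U_{2m}\cdots U_{m+1}$ that would otherwise appear in $\langle\phi|\psi\rangle$. This is the single structural trick; everything else is an unchanged Hadamard-test computation.
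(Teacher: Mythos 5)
Your proof is correct and follows exactly the route the paper intends: the paper simply states that "the proof of the lemma is obtained by directly computing $p(0)$," and your calculation is that direct computation, including the correct index-pairing argument showing $\langle\phi|\psi\rangle = \langle 0|\mathcal{U}|0\rangle$ and the sign in $p(0)=\tfrac{1}{2}(1-\mathrm{Im}\langle 0|\mathcal{U}|0\rangle)$ for the $HP$ variant.
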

Remark that lemma \ref{thm_improved_hadamard} requires ${\cal U}$ to have even size. This is however not an essential requirement since a circuit of odd size $2m+1$ can be ``padded'' with an additional identity. This yields a circuit ${\cal U}'$ of size $m+1$.

The proof of the lemma is obtained by directly computing $p(0)$. Similar to the Hadamard test, the above result provides a simple quantum algorithm to estimate matrix elements of unitary quantum circuits with polynomial accuracy (and with success probability exponentially close to 1).
Different from the standard Hadamard test, however, is that the \emph{size} of the circuit ${\cal U}'$ used in lemma \ref{thm_improved_hadamard} is  \emph{half} the size of the original circuit ${\cal U}$ i.e. the alternate Hadamard test is ``twice as fast''. The price to pay for this is that the gates in ${\cal U}'$ act on a larger number of qubits: if ${\cal U}$ is a $k$-local circuit then ${\cal U}'$ can be as much as  $(2k+1)$-local.

\

{\bf Proof of theorem \ref{thm_comm_non_comm}:} Without loss of generality we can assume that ${\cal C}_1$ and ${\cal C}_2$ are in standard form, say ${\cal C}_1= G_m\cdots G_1$ and ${\cal C}_2= G_m'\cdots G_1'$ where $m = {n \choose k}$. By definition of the standard form, for every subset $S$ of $k$ qubits there is precisely one gate $G_i$ and one gate $G_j'$ such that supp$(G_i)\subseteq S$ and supp$(G_j')\subseteq S$. By suitably labeling the gates in both circuits  we can ensure that always $j = m+1-i$. Now apply lemma \ref{thm_improved_hadamard} to the circuit ${\cal U}:={\cal C}_1{\cal C}_2$ with the identification $U_i := G_i$ and $U_{m+i} := G_i$ for every $i=1\cdots m$. Then  each gate (\ref{W_gates}) acts on the qubits in $S$ together with qubit $1$ so that this gate is  $(k+1)$-local (at most). Note furthermore that all gates $W_i$ mutually commute. Finally, define $W_i':= [H\otimes I] W_i[H\otimes I]$ where $H$ acts on qubit 1. Since all $H$ gates in the middle cancel out, the $(k+1)$-local commuting circuit ${\cal C} = \prod_i  W_i'$ acting on $|0\rangle$ followed by measurement of $Z_1$ yields the same output as the circuit ${\cal U}'$ of lemma \ref{thm_improved_hadamard}. This allows to estimate the real part of $\langle 0|{\cal C}_1{\cal C}_2|0\rangle$ with polynomial accuracy within the class $\Gamma^{k+1}$. The imaginary part is treated analogously. \finpr

\subsection{Constant-depth circuits}

Here we will relate commuting circuits with constant-depth circuits comprising \emph{arbitrary} gates.

\begin{theorem}\label{thm_bounded_depth}{\bf (Estimating constant-depth matrix elements)}
Let ${\cal U}$ be a uniform family of  $n$-qubit quantum circuits of constant depth $m$.  Then there exists a polynomial time $\Gamma^{k}$-algorithm  to approximate $|\langle 0|{\cal U}|0\rangle|^2$  with polynomial accuracy (and with success probability exponentially  close to 1) where $k=2^m+1$.
\end{theorem}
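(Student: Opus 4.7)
The strategy is to express $|\langle 0|\mathcal{U}|0\rangle|^2$ as an expectation value over a random choice of Pauli operators, each of which yields a matrix element of a $2^m$-local commuting circuit computable by Theorem~\ref{thm_comm_non_comm}. Sampling then reduces the problem to a polynomial number of invocations of that theorem.

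Concretely, using the identity $|0\rangle\langle 0|=\prod_i \tfrac{I+Z_i}{2}=2^{-n}\sum_{S\subseteq\{1,\dots,n\}} Z_S$ with $Z_S:=\prod_{i\in S}Z_i$, one writes
\be
|\langle 0|\mathcal{U}|0\rangle|^2 \;=\; \langle 0|\mathcal{U}^\dagger|0\rangle\langle 0|\mathcal{U}|0\rangle \;=\; \mathbb{E}_S\bigl[f(S)\bigr],\qquad f(S)\;:=\;\langle 0|\mathcal{U}^\dagger Z_S\,\mathcal{U}|0\rangle,
\ee
with $S$ drawn uniformly and $f(S)\in[-1,1]$ since $\mathcal{U}^\dagger Z_S\mathcal{U}$ is Hermitian and unitary. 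The next key observation concerns the operators $V_i:=\mathcal{U}^\dagger Z_i\,\mathcal{U}$: since $\mathcal{U}$ has depth $m$ and is built from $2$-local gates, the support of each $V_i$ is contained in the reverse lightcone of qubit $i$, of size at most $2^m$; the $V_i$ are pairwise commuting unitaries because the $Z_i$ are; and $\mathcal{U}^\dagger Z_S\,\mathcal{U}=\prod_{i\in S}V_i$. Hence $\mathcal{C}_S:=\prod_{i\in S}V_i$ is a $2^m$-local commuting circuit with at most $n$ gates, and an explicit gate-level description of $\mathcal{C}_S$ can be produced in polynomial time by conjugating each $Z_i$ through the (constant-size) lightcone of gates in $\mathcal{U}$.

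Each value $f(S)=\langle 0|\mathcal{C}_S|0\rangle$ can then be approximated by applying Theorem~\ref{thm_comm_non_comm} with $k=2^m$, setting $\mathcal{C}_1=I$ and $\mathcal{C}_2=\mathcal{C}_S$; this returns an estimate $\tilde f(S)$ satisfying $|\tilde f(S)-f(S)|\le \epsilon/2$ with failure probability exponentially small in $n$, using only the resources of $\Gamma^{2^m+1}$. Drawing $K=\mathrm{poly}(n)$ independent samples $S_1,\dots,S_K$ and outputting $\hat E:=\tfrac{1}{K}\sum_j \tilde f(S_j)$, the Chernoff-Hoeffding bound~(\ref{hoeff}) applied to the exact values $f(S_j)\in[-1,1]$ ensures that $\tfrac{1}{K}\sum_j f(S_j)$ is within $\epsilon/2$ of $\mathbb{E}_S[f(S)]=|\langle 0|\mathcal{U}|0\rangle|^2$ with probability exponentially close to $1$; a union bound over the $K$ calls to Theorem~\ref{thm_comm_non_comm} controls the aggregated per-sample errors, yielding the overall bound $|\hat E - |\langle 0|\mathcal{U}|0\rangle|^2|\le \epsilon$ with confidence exponentially close to $1$.

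There is no substantive obstacle; the plan is essentially a reduction together with standard sampling. The main bookkeeping task is to calibrate the per-sample accuracy $\epsilon/2$ and the sample size $K=O(\epsilon^{-2})$ against the failure probability of Theorem~\ref{thm_comm_non_comm} so that the union-bounded failure remains exponentially small, which is straightforward given that the per-call failure probability is already exponentially small. A brief sanity check that $\mathcal{C}_S$ qualifies as a uniformly generated $2^m$-local commuting circuit family---i.e. that its description is computable in polynomial time from the input $(n,S)$ and from the description of $\mathcal{U}$---completes the argument.
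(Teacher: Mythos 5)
Your proof is correct and follows essentially the same approach as the paper: expand $|0\rangle\langle 0|$ in the Pauli $Z$-basis, observe that the conjugated operators $\mathcal{U}^\dagger Z_i\mathcal{U}$ form a $2^m$-local commuting circuit, estimate each matrix element with a Hadamard-test-style commuting circuit, and average via Chernoff--Hoeffding. The only cosmetic difference is that you invoke Theorem~\ref{thm_comm_non_comm} with $\mathcal{C}_1=I$ rather than applying the standard Hadamard test directly, but in that degenerate case the alternate Hadamard test reduces to the standard one, so the two arguments coincide.
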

Recall that the problem of estimating matrix elements $|\langle 0|{\cal U}|0\rangle|$ of polynomial size quantum circuits of arbitrary depth is known to be BQP-hard (and the naturally corresponding decision problem is BQP-complete). Theorem \ref{thm_bounded_depth} shows that such matrix elements can be estimated efficiently with $k$-local commuting circuits with constant $k$ as long as ${\cal U}$ has constant depth (with an exponential scaling of $k$ with $m$). Although one would not expect the constant-depth matrix problem to be BQP-hard, this task appears to be nontrivial for classical computers and, to our knowledge, no efficient classical algorithm is known.

\

{\bf Proof of theorem \ref{thm_bounded_depth}:} Letting $Z_j$ denote the operator $Z$ acting on qubit $j$, we define $Z(S)= \prod_{j\in S} Z_j$ for every subset $S\subseteq\{1, \dots, n\}$.  Using \be |0\rangle\langle 0| = \frac{1}{2^n}\sum Z(S),\ee where the sum is over all subsets $S$, one finds \be\label{ME_bounded_depth1} |\langle 0|{\cal U}|0\rangle|^2 = \langle 0| {\cal U}^{\dagger} |0\rangle\langle0|{\cal U}|0\rangle = \frac{1}{2^n} \sum \langle 0|{\cal U}^{\dagger}Z(S) {\cal U}|0\rangle.\ee  Setting $G_j:= {\cal U}^{\dagger}Z_j {\cal U}$ yields \be \label{ME_bounded_depth2}\langle 0|{\cal U}^{\dagger}Z(S) {\cal U}|0\rangle = \langle 0|\prod_{j\in S} G_j|0\rangle=: F(S)\ee for every subset $S$. Since the $Z_j$ mutually commute, the $G_j$ mutually commute as well as these operators are obtained by simultaneously conjugating the $Z_j$. Furthermore since ${\cal U}$ has depth $m$, each $G_j$ acts on at most $2^m$ qubits. Thus $F(S)$ is a matrix element of a $2^m$-local commuting circuit. Via the standard Hadamard test (recall Fig. \ref{fig_hadamard} and the proof of theorem \ref{thm_hardness_3_local}) one constructs a $k$-local commuting circuit with $k=2^m+1$ which allows to estimate any such matrix element with polynomial accuracy in polynomial time, with success probability exponentially close to 1.

We now use these findings to give an efficient $\Gamma^k$-algorithm  to estimate $\gamma:=|\langle 0|{\cal U}|0\rangle|^2$ with polynomial accuracy. Owing to (\ref{ME_bounded_depth1})-(\ref{ME_bounded_depth2}), one has $\gamma:= 2^{-n} \sum F(S)$. Thus $\gamma$ equals the expectation value of a random variable over the collection of all $2^n$ subsets $S$  which takes the value $F(S)$ with uniform probability.   Fix $\epsilon>0$. First we generate $K$ subsets $S_{\alpha}\subseteq\{1, \dots, n\}$ uniformly at random.  Applying the Chernoff-Hoeffding bound we find that, for some sufficiently large $K = $ poly$(n, 1/\epsilon)$, one has \be\label{chernoff_bounded_depth} \left|\frac{1}{K} \sum_{\alpha = 1}^K F(S_{\alpha}) - \gamma\right|\leq \epsilon/2\ee with probability exponentially close to 1. Next, as described above we can efficiently compute an estimate $f_{\alpha}$ of each $F(S_{\alpha})$ using $\Gamma^{k}$-circuits with $k=2^m+1$; more precisely, we compute $K$ numbers $f_{\alpha}$ satisfying $|f_{\alpha}- F(S_{\alpha})|\leq \epsilon/2$. The runtime of the computation will be poly$(n, 1/\epsilon)$ and the success probability exponentially close to 1. Finally,  we compute $c:= [\sum f_{\alpha}]/K$ which takes poly$(n, 1/\epsilon)$ time as well. Using (\ref{chernoff_bounded_depth}) and the triangle inequality  it follows that $|c - \gamma|\leq \epsilon$. Thus $c$ is our desired polynomial approximation of $\gamma$. \finpr

Finally we note that theorem \ref{thm_bounded_depth} can be generalized in the following rather intriguing sense: using $\Gamma^k$-circuits one can also efficiently estimate matrix elements of the form $|\langle 0|{\cal U}{\cal C}|0\rangle|^2$ where ${\cal U}$ is again a constant-depth circuit and where ${\cal C}$ represents an arbitrary uniform family of \emph{Clifford circuits}. Interestingly, these Clifford circuits need not have constant depth. The proof, which is given in appendix \ref{app_sect_counded_depth_cliff},  uses an argument analogous to the proof of theorem \ref{thm_bounded_depth} combined with the alternate Hadamard test given in lemma \ref{thm_improved_hadamard}.

\section{Acknowledgements}
The proof of Theorem \ref{thm_simulation_two_local} was found in a discussion with V. Murg and M. Schwarz while M. Van den Nest visited the University of Vienna in November 2009.

\appendix

\section{A generalization of theorem \ref{thm_bounded_depth}}\label{app_sect_counded_depth_cliff}

\begin{theorem}
Let ${\cal U}$ be a (uniform family of)  $n$-qubit quantum circuit(s) of depth $m$. Let ${\cal C}$ be a (uniform family of)  $n$-qubit Clifford circuit(s). Then the problem of estimating the matrix element $|\langle 0|{\cal U}{\cal C}|0\rangle|^2$  with polynomial accuracy and with success probability exponentially (in $n$) close to 1 is in  $\Gamma^k$  with $k=2^m+1$.
\end{theorem}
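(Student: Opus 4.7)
The plan is to extend the argument of theorem~\ref{thm_bounded_depth} by replacing the sum over subsets $S$ that appeared there with a sum over the stabilizer group of the Clifford state $|\phi\rangle := {\cal C}|0\rangle$, thereby absorbing the (arbitrarily deep) Clifford postfactor ${\cal C}$. Let $g_i := {\cal C} Z_i {\cal C}^\dagger$ for $i=1,\ldots,n$ denote the stabilizer generators of $|\phi\rangle$; each $g_i$ is a Pauli operator computable in polynomial time from the description of ${\cal C}$ by classical Clifford--Pauli conjugation. Writing $\mathcal{S}$ for the resulting abelian group of $2^n$ Hermitian stabilizer Paulis, the standard projector identity $|\phi\rangle\langle\phi|=2^{-n}\sum_{P\in\mathcal{S}}P$ rewrites
\[
|\langle 0|{\cal U}{\cal C}|0\rangle|^2 \;=\; \mathrm{Tr}\!\bigl(|0\rangle\langle 0|\,{\cal U}|\phi\rangle\langle\phi|{\cal U}^\dagger\bigr) \;=\; \frac{1}{2^n}\sum_{P\in\mathcal{S}} \langle 0|\,{\cal U} P {\cal U}^\dagger\,|0\rangle ,
\]
which is the Clifford-augmented analogue of the formula $|\langle 0|{\cal U}|0\rangle|^2 = 2^{-n}\sum_S F(S)$ used in theorem~\ref{thm_bounded_depth}.

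For each fixed $P\in\mathcal{S}$ I would factor $P = \alpha\prod_j P_j$ with $\alpha\in\{\pm 1\}$ and $P_j$ the (possibly trivial) single-qubit Pauli on qubit $j$. The Heisenberg conjugates $F_j := {\cal U} P_j {\cal U}^\dagger$ pairwise commute (inherited from the $P_j$'s) and each $F_j$ is supported on the at-most-$2^m$-qubit reverse lightcone of qubit $j$ under the depth-$m$ circuit ${\cal U}$, so
\[
{\cal U} P {\cal U}^\dagger \;=\; \alpha\prod_{j=1}^n F_j
\]
is a product of mutually commuting $2^m$-local block unitaries, structurally identical to the operators $\prod_{j\in S}G_j$ of theorem~\ref{thm_bounded_depth}. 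The standard Hadamard test of Fig.~\ref{fig_hadamard}, realised as the $(2^m+1)$-local commuting circuit on $n+1$ qubits whose gates are the pairwise-commuting controlled block operations $|0\rangle\langle 0|\otimes I + |1\rangle\langle 1|\otimes F_j$ sandwiched between two Hadamards on the ancilla, therefore produces a $\pm 1$ outcome $e$ with $\mathbb{E}[e]=\alpha^{-1}\langle 0|{\cal U} P{\cal U}^\dagger|0\rangle$.

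The overall $\Gamma^{2^m+1}$ algorithm then wraps this subroutine in a Monte Carlo loop: repeatedly (i)~sample $P$ uniformly from $\mathcal{S}$ by drawing random bits $c_1,\ldots,c_n\in\{0,1\}$ and forming $P=\prod_i g_i^{c_i}$ together with its overall sign $\alpha$; (ii)~invoke the commuting quantum subroutine of the previous paragraph to obtain $e\in\{\pm 1\}$; (iii)~after $K=\mathrm{poly}(n,1/\epsilon)$ rounds return the sample mean of $\alpha\cdot e$. Since $\mathbb{E}[\alpha\cdot e] = |\langle 0|{\cal U}{\cal C}|0\rangle|^2$ and each draw lies in $[-1,1]$, the Chernoff--Hoeffding bound of section~\ref{sect_chernoff} supplies the required polynomial accuracy with exponentially small failure probability. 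The conceptual crux is the stabilizer-sum identity of the first paragraph, which trades the depth of ${\cal C}$ for a uniform average over $2^n$ Paulis and thereby reduces the problem verbatim to theorem~\ref{thm_bounded_depth}; I expect the remaining obstacle to be purely bookkeeping---verifying that the $F_j$'s and their controlled versions are efficiently computable $(2^m+1)$-local commuting block gates and hence a valid $\Gamma^{2^m+1}$ call, and that the two sources of randomness (the draw of $P$ and the quantum outcome $e$) combine cleanly in the Chernoff bound. As an optional refinement one may substitute the alternate Hadamard test of lemma~\ref{thm_improved_hadamard} for the standard one in step~(ii), folding the subroutine circuit to half its length while preserving both its locality and its commuting structure.
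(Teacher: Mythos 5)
Your proof is correct, and it is in fact more direct than the paper's. Both arguments start from the same stabilizer expansion: the paper rewrites $|\langle 0|{\cal C}{\cal U}|0\rangle|^2 = 2^{-n}\sum_S\langle 0|{\cal U}^\dagger P_S {\cal U}|0\rangle$ with $P_S={\cal C}^\dagger Z(S){\cal C}$, which is identically your $2^{-n}\sum_{P\in\mathcal{S}}\langle 0|{\cal U}P{\cal U}^\dagger|0\rangle$ after relabelling. The divergence is in how the Pauli $P$ is broken into commuting factors before conjugation by ${\cal U}$. The paper writes $P = i^{t}\bigl(\prod_k X_k^{a_k}\bigr)\bigl(\prod_k Z_k^{b_k}\bigr)$ and conjugates the $X$-block and $Z$-block separately, obtaining ${\cal C}_1=\prod {\cal U}^\dagger X_k^{a_k}{\cal U}$ and ${\cal C}_2=\prod {\cal U}^\dagger Z_k^{b_k}{\cal U}$. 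These two layers do not commute with each other, so the paper must invoke Theorem~\ref{thm_comm_non_comm} (the alternate Hadamard test of Lemma~\ref{thm_improved_hadamard}) to fold the two-layer circuit into a $(2^m+1)$-local commuting one. You instead factor $P=\alpha\prod_j P_j$ with $P_j$ the Hermitian single-qubit Pauli on qubit $j$. Since the $P_j$ live on disjoint qubits they \emph{all} mutually commute, so the conjugates $F_j={\cal U}P_j{\cal U}^\dagger$ form a single layer of pairwise-commuting $2^m$-local gates, and the ordinary Hadamard test from Fig.~\ref{fig_hadamard} already produces a $(2^m+1)$-local commuting circuit, exactly as in the proof of Theorem~\ref{thm_bounded_depth}. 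This sidesteps Lemma~\ref{thm_improved_hadamard} and Theorem~\ref{thm_comm_non_comm} entirely while reaching the same locality $k=2^m+1$. Your handling of the sign $\alpha\in\{\pm1\}$ and the single combined Chernoff bound on $\alpha\cdot e\in\{\pm1\}$ is also slightly cleaner than the paper's nested estimate of each $F(S_\alpha)$; both work. The only small thing worth making explicit is the observation (which you use implicitly) that $\langle 0|\prod_j F_j|0\rangle$ is real because $\prod_j F_j = {\cal U}(\alpha^{-1}P){\cal U}^\dagger$ is Hermitian, so only the real-part Hadamard test is needed.
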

\begin{proof} Similar to (\ref{ME_bounded_depth1}) one has \be\label{appendix1} |\langle 0|{\cal C}{\cal U}|0\rangle|^2 = \frac{1}{2^n} \sum \langle 0|{\cal U}^{\dagger}{\cal C}^{\dagger}Z(S){\cal C} {\cal U}|0\rangle.\ee Since ${\cal C}$ is Clifford, ${\cal C}^{\dagger}Z(S){\cal C} =: P$ is a Pauli operator which can moreover be determined efficiently; that we  suppress dependence of $P$ on $S$ to simplify notation. Following (\ref{Pauli}), we can write   \be
P=i^{t}\prod X_k^{a_k}Z_k^{b_k},\quad \mbox{ where } t \in \{0,1,2,3\},\ a_k, b_k \in \{0,1\}.
\ee
Now define $G_k := {\cal U}^{\dagger}X_k^{a_k}{\cal U}$ and $H_k := {\cal U}^{\dagger}Z_k^{a_k}{\cal U}$ as well as ${\cal C}_1:= \prod G_k$ and ${\cal C}_2:= \prod H_k$. Then \be \langle 0|{\cal U}^{\dagger}{\cal C}^{\dagger}Z(S){\cal C} {\cal U}|0\rangle = i^{t} \langle 0|{\cal C}_1{\cal C}_2|0\rangle.\ee Since the $Z_k$ mutually commute, the $G_k$ mutually commute as well. Furthermore each $G_j$ acts on at most $2^m$ qubits. Therefore ${\cal C}_1$ is a $2^m$-local commuting circuit. Similarly, ${\cal C}_2$ is a $2^m$-local commuting circuit as well. We can now apply theorem \ref{thm_comm_non_comm}, showing that $\langle 0|{\cal C}_1{\cal C}_2|0\rangle$ can be estimated with polynomial accuracy using $\Gamma^k$-circuits with $k=2^m+1$. Continuing the argument as in the proof of theorem \ref{thm_bounded_depth} completes the proof.
\end{proof}

\bibliography{commuting}{}
\bibliographystyle{hquantinf}

\end{document}